\documentclass[12pt]{article} \usepackage{fullpage}
\usepackage{amssymb}
\usepackage{amsmath}
\usepackage{amsthm}
\usepackage{todonotes}

\usepackage{nicefrac}
\usepackage{url}
\usepackage{color}
\usepackage[american]{babel}
\usepackage{graphicx}
\usepackage{subcaption}
\newcommand{\student}[1]{} % ignore, needed for bibtex file

\newtheorem{theorem}{Theorem}

\newtheorem{definition}{Definition}

\newtheorem{observation}{Observation}

\newtheorem{lemma}{Lemma}

% comment/uncomment the following for final version

%\renewcommand{\comment}[1]{}
%\renewcommand{\leaveout}[1]{#1}
\newcommand{\leaveout}[1]{}

\renewcommand{\medskip}{\smallskip}
\renewcommand{\int}{{\ensuremath{\rm int\,}}}

\date{}
\title{A 2-Approximation for the Height of Maximal Outerplanar Graph Drawings }
\author{Therese Biedl\thanks{David R.~Cheriton School of Computer
Science, University of Waterloo, Waterloo, Ontario N2L 1A2, Canada.
{\it biedl@uwaterloo.ca}.
Supported by NSERC.}
\and Philippe Demontigny\thanks{{\it phdemontigny@gmail.com}.  
Part of this work appeared as the author's
Master's thesis at UWaterloo.}}

\begin{document}

\maketitle
\begin{abstract}
In this paper, we study planar drawings of maximal outerplanar graphs
with the objective of achieving small height.
A recent paper gave an algorithm for such drawings
that is within a factor of 4 of the optimum height.
In this paper, we substantially improve the approximation factor to become 2.
The main ingredient is to define a new parameter of outerplanar graphs
(the so-called {\em umbrella depth}, obtained by recursively splitting
the graph into graphs called umbrellas).  We argue that the height of
any poly-line drawing must be at least the umbrella depth, and then devise
an algorithm that achieves height at most twice the umbrella depth.
\end{abstract}

%======================================================================
\section{Introduction}
%======================================================================

Graph drawing is the art of creating a picture of a graph that is visually appealing.   
%It is known that finding the minimum area drawing for a given planar graph is NP-hard \cite{KW07}. However, for some subclasses of planar graphs, such as planar 3-trees, it is possible to find a minimum area drawing in polynomial time \cite{md2011minimum}. For general planar graphs, it has been known for a while that a straight-line drawing can always be found that uses an $O(n) \times O(n)$ grid \cite{FPP90, Sch90}, where $n$ is the number of vertices. It is also known that there are certain planar graphs that require an $\Omega(n) \times \Omega(n)$ grid \cite{FPP88}.
%
%When visualizing any kind of graph in a software application, such as a program for visualizing computer networks, it is important that the information be easy for a user to see and process. If a drawing doesn't fit on a single screen, then the user would need to scroll in order to see all of the information. Scrolling in one direction is easy and intuitive, but scrolling in multiple directions (up-down and left-right, for example), can quickly become overwhelming. Thus it is often better to focus on minimizing one dimension of a 2D drawing at a time in order to avoid this situation.
%
In this paper, we are interested in drawings of so-called {\em outer-planar
graphs}, i.e., graphs that can be drawn in the plane such that no two edges
have a point in common (except at common endpoints) and all vertices are 
incident to the outer-face.   All drawings are required to be planar, i.e.,
to have no crossing.  The drawing model used is that of 
%{\em poly-line drawings} where  vertices are points in the integral grid and edges are
%drawn as polygonal curves where all bends are at grid points as well.  However,
%our constructions are for so-called 
flat visibility representations where vertices are horizontal segments and
edges are horizontal or vertical segments, but any such drawing can be
transformed into a poly-line drawing (or even a straight-line drawings if the
width is of no concern) without adding height \cite{Bie-GD14}.

Every planar graph has a straight-line drawing in an $n\times n$-grid \cite{Sch90,FPP90}.
%and every outer-planar graph has a poly-line drawing in an $n\times 
%\log n$-grid \cite{Bie-GD02}.  
Minimizing the area is NP-complete \cite{KW07}, even for
outer-planar graphs \cite{Bie14-ICALP}. 
In this paper, we focus on minimizing just one direction of a drawing
(we use the height; minimizing the width is equivalent after rotation). 
It is not known whether minimizing the height of a planar drawing is
NP-hard (the closest related result concerns minimizing the height if
edges must connect adjacent rows \cite{HR92}).
Given the height $H$, testing whether a planar drawing of height $H$ exists 
is fixed parameter tractable in $H$ \cite{DFK+08}, but
the run-time is exceeding large in $H$.  As such, approximation algorithms
for the height of planar drawings are of interest.

It is known that any graph $G$ with a planar drawing of height $H$ has 
$pw(G)\leq H$ \cite{FLW03}, where $pw(G)$ is the so-called pathwidth of $G$. 
This makes the pathwidth 
a useful parameter for approximating the height of a planar 
graph drawing.  For a tree $T$, Suderman gave an algorithm to draw $T$ with
height at most $\lceil \frac{3}{2}pw(T)\rceil$ \cite{Sud04},
making this an asymptotic $\frac{3}{2}$-approximation algorithm.
It was discovered later that optimum-height drawings can be found efficiently
for trees \cite{MAR11}.  Approximation-algorithms for the height or width of
order-preserving and/or upward tree drawing have also been investigated
\cite{ASR+10,BB16-ArXiV,Bie-OPTI-ArXiV}.

For outer-planar graphs, the first author gave two results that will be improved
upon in this paper.   In particular, every maximal outerplanar graph has a
drawing of height at most $3\log n-1$ \cite{Bie-GD02} and of height $4pw(G)-3$
\cite{Bie-WAOA12}.
\iffalse
Both results are for so-called
{\em flat visibility representations} defined later.
\begin{theorem} 
\label{thm:BiedlLogApproximation}
\cite{Bie-GD02} Every outerplanar graph with $n$ vertices has a flat visibility representation in a $(\frac{3}{2}n - 2) \times (3\log n - 1)$ grid.
\end{theorem}
\begin{theorem} \label{thm:Biedl4Approximation}
\cite{Bie-WAOA12} Every 2-connected outerplanar graph $G$ has a flat visibility representation with height $4pw(G) - 3$.
\end{theorem}
\fi
Note that the second result gives a 4-approximation on the height of drawing
outerplanar graphs, and improving this ``4'' is the main objective of the
current paper.
A number of results for drawing outer-planar graphs have been developed since paper \cite{Bie-GD02}.
In particular, 
%\begin{theorem} \label{thm:SmallDrawingsOfOuterplanarGraphs}
any outerplanar graph with maximum degree $\Delta$ admits a planar straight-line drawing with area $O(\Delta n^{1.48})$ \cite{GR07},
%\begin{theorem} \label{thm:FratiStraightLineLogApprox}
or with area $O(\Delta n\log n)$ \cite{Frati12}.
The former bound was improved to $O(n^{1.48})$ area \cite{DBF09}. 
Also, every so-called balanced outer-planar graph can be drawn in an
$O(\sqrt{n})\times O(\sqrt{n})$-grid \cite{DBF09}.

%----------------------------------------------------------------------
%\subsection{Overview of Thesis}
%----------------------------------------------------------------------

In this paper, we present a 2-approximation algorithm for the height of
planar drawings of maximal outer-planar graphs.  The key ingredient is to
define the so-called {\em umbrella depth} $ud(G)$ in
Section~\ref{chap:Umbrellas}. In Section~\ref{chap:UpperBound}, we show 
that any outerplanar graph $G$ has a planar drawing of height at most $2ud(G)+1$. 
(We actually show a height of $2bd(G)+1$, where the {\em bonnet depth} $bd(G)\leq ud(G)$
is another newly defined graph parameter.)  This algorithm is a relatively minor
modification of the one in \cite{Bie-WAOA12}, albeit described differently.
The bulk of the work for proving a better approximation factor hence lies
in proving a better lower bound, which we do in Section~\ref{chap:LowerBound}:
Any maximal outerplanar graph $G$ with a planar drawing of height $H$ has $ud(G)\leq H-1$.
This proves that our result is a 2-approximation for the optimal height, which 
must fall in the range $[ud(G)+1, 2ud(G)+1]$.  

%Our algorithm to compute the visibility representation assumes that the umbrella depth of $G$ is known. In Section~\ref{chap:Algorithm}, we provide an algorithm for finding the umbrella depth in $O(n)$ time. In Section~\ref{chap:Comparison}, we compare the umbrella depth to the pathwidth and rooted pathwidth, which have been used in previous papers to establish bounds on the optimal height for drawings of a maximal outerplanar graph. We show that our height-bounds are never worse than the bounds from those papers except for a small additive term. Lastly, in Section~\ref{chap:Conclusions}, we discuss possibilities for future research and other problems that remain open.

%======================================================================
\section{Preliminaries}\label{chap:Preliminaries}
%======================================================================

Throughout this paper, we assume that $G=(V,E)$ is a simple graph with $n\geq 3$
vertices and $m$ edges that is {\em maximal outer-planar}.  Thus, $G$ has
a \emph{standard planar embedding} in which all vertices are in the 
\emph{outer face} (the infinite connected region outside the drawing)
and form an $n$-cycle, and all \emph{interior faces} are triangles.
We call an edge $(u,v)$ of $G$ a \emph{cutting edge} if $G-\{u,v\}$ is
disconnected, and a \emph{non-cutting edge} otherwise.%
\footnote{The cutting edges are
exactly those edges for which in the standard embedding both incident
faces are interior, but we prefer to phrase this and the following definitions
independent of the standard embedding since we do not necessarily draw the
graph in the standard embedding.}
In an outer-planar graph, any cutting edge $(u,v)$ 
has exactly two \emph{cut-components}, i.e., there are two maximal
outerplanar subgraphs 
$G_1,G_2$ of $G$ such that $G_1\cap G_2=\{u,v\}$ and $G_1\cup G_2=G$.

\begin{figure}[h!]
\hspace*{\fill}
\begin{subfigure}[b]{0.3\textwidth}
	\includegraphics[width=\linewidth,trim=0 0 270 0,clip]{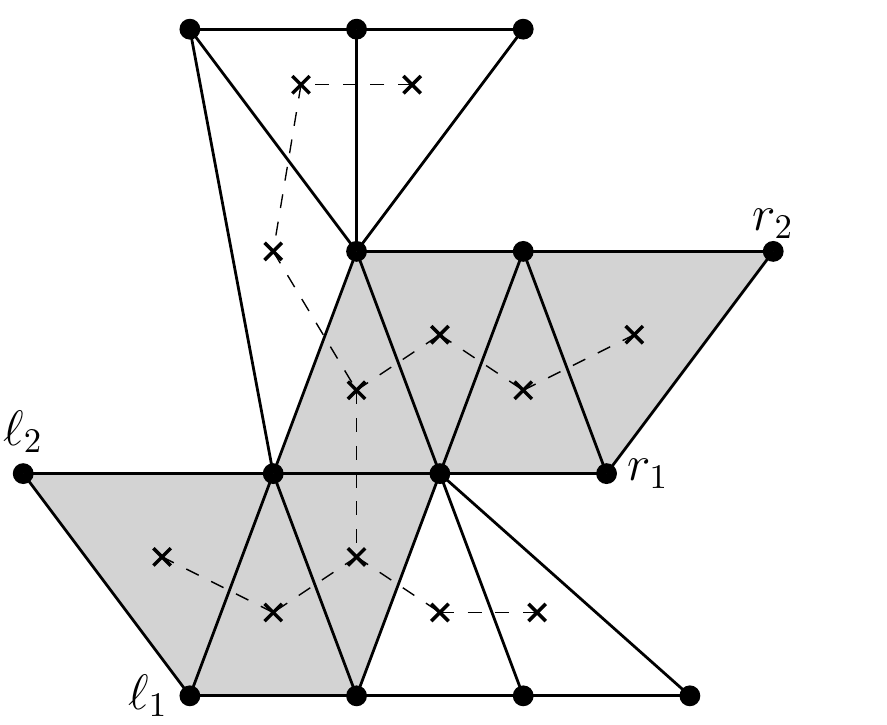}
	\caption{}
\end{subfigure}
\hspace*{\fill}
\begin{subfigure}[b]{0.3\linewidth}
	\includegraphics[width=\linewidth,trim=250 0 0 0,clip]{DefinitionsExampleFigure.pdf}
	\caption{}
\end{subfigure}
\hspace*{\fill}
\caption{(a) A straight-line drawing in the standard embedding, including the dual tree (dashed edges) and an outerplanar path (shaded) connecting $(\ell_1,\ell_2)$ with $(r_1,r_2)$. (b) A flat visibility representation. Both drawings have height 4. \label{fig:DefinitionsExample}}
\end{figure}

The \emph{dual tree} $T$ of $G$ is the weak dual graph of $G$ in the standard embedding, i.e., $T$ has a vertex for each interior face of $G$, and an edge
between two vertices if their corresponding faces in $G$ share an edge. 
%Note that this definition is different from the so-called \emph{dual graph} of $G$, which includes the outer face of $G$ as a vertex. In this thesis, we will not make use of the dual graph. Since $G$ is assumed to be maximal, and all faces are triangles, it follows that the maximum degree of the dual tree of $G$ is 3. Figure \ref{fig:DefinitionsExample} includes an example of a dual tree for a maximal outerplanar graph as well.
An \emph{outerplanar path} $P$ is a maximal outerplanar graph whose dual tree is a path.   
We say that $P$ {\em connects edges $e$ and $e'$}
%$(\ell_1,\ell_2)$ and $(r_1,r_2)$} 
%if $(\ell_1,\ell_2)$ is incident to $f_1$ and 
if $e$ is incident to $f_1$ and 
%$(r_1,r_2)$ is incident to $f_k$, where $f_1$ and $f_k$ are the first and last face 
$e'$ is incident to $f_k$, where $f_1$ and $f_k$ are the first and last face 
in the path that is the dual tree of $P$.
%if $deg(\ell_i)=\deg(r_j)=2$ but $\ell_i\neq r_j$
%for some $i,j\in \{1,2\}$.  (Thus, $(\ell_1,\ell_2)$ and $(r_1,r_2)$ are edges on the
%first and last face of $P$, where ``first'' and ``last'' is relative to the path that
%is the dual tree of $P$.)
An outerplanar path $P$ with $n=3$ is a triangle and 
connects any pair of its edges.
Since any two interior faces are connected by a path in the dual tree, there exists
an outerplanar path connecting $e$ and $e'$ for any two edges $e,e'$.

\medskip\noindent{\bf Graph drawing: }
A \emph{drawing} of a graph consists of a point or an axis-aligned box for every vertex, and a polygonal curve for every edge. We only consider \emph{planar drawings} where none of the points, boxes, or curves intersect unless the corresponding elements do in the original graph. In this paper, a planar drawing is not required to reflect a graph's given planar embedding. In a \emph{flat visibility representations} vertices are represented by horizontal line segments, and edges are vertical or horizontal straight-line segments. (For ease of reading, draw vertices as boxes of small height in our illustrations.) In a \emph{poly-line drawing} vertices are 
points and edges are polygonal curves, while in a {\em straight-line drawing}
vertices are points and edges are line segments. 
In this paper, we only study planar flat visibility representations, but simply speak of a {\em planar drawing},
because it is known that any planar flat visibility representation can be converted into a planar
straight-line drawing of the same height and vice versa \cite{Bie-GD14}.

We require that all defining features (points, endpoints of segments, bends)
are placed at points with integer $y$-coordinates.  A {\em layer} (or {\em row})
is a horizontal line with integer $y$-coordinate that intersects elements of the
drawing, and the {\em height} is the number of layers.
We do not enforce integer
$x$-coordinates since we do not focus on minimizing the width. We
can always achieve $O(n)$ width (without adding height) for 
visibility representations 
and for the poly-line drawings obtained from them \cite{Bie-GD14}.)

%======================================================================
\section{Umbrellas, bonnets and systems thereof}
\label{chap:Umbrellas}
%======================================================================

In this section, we introduce 
a method of splitting maximal outerplanar graphs into systems of special
outerplanar graphs called \emph{umbrellas} and \emph{bonnets}.

\begin{definition} \label{def:Umbrella}
Let $G$ be a maximal outer-planar graph, 
let $U$ be a subgraph of $G$ with $n \geq 3$, and let $(u,v)$
be a non-cutting edge of $G$.  We say that $U$ is an
\emph{umbrella with cap $(u,v)$} if it can be written as the union
of three outerplanar paths $P$, $F_1$, and $F_2$ such that:
\begin{enumerate}
\item $P$ (the {\em handle}) connects $(u,v)$ to some other non-cutting edge of $G$,
\item $F_1$ (the {\em fan at $u$}) contains only $u$ and neighbours of $u$.
$F_2$ (the {\em fan at $v$}) contains only $v$ and neighbours of $v$.
\item $F_1$ and $F_2$ are edge-disjoint.  $F_1$ and $P$ have exactly one edge (incident to $u$) in common; $F_2$ and $P$ have exactly one edge (incident to $v$) in common.
\item All neighbours of $u$ and $v$ belong to $U$.
\end{enumerate}
\end{definition}

See also Figure~\ref{fig:UmbrellaSystemExample}(a).
We allow the fans to be empty,
% in the sense that they might consist of
%only one edge incident to $u$ and hence have no interior face, 
but $P$
must have at least one interior face (the one incident to $(u,v)$).
%If $G$ is a maximal outerplanar graph and $U\subset G$ is an umbrella,
%then 
Any edge $(a,b)$ of $U$ that is a cutting edge
of $G$, but not of $U$, is called an {\em anchor-edge} of $U$ in $G$.  (In the
standard embedding, such edges are on the outer-face of $U$ but not on the
outer-face of $G$.)  
The {\em hanging subgraph with respect to anchor-edge $(a,b)$ of $U$ in $G$} 
is the cut-component $S_{a,b}$ of $G$ with respect to cutting-edge $(a,b)$ that does not 
contain the cap $(u,v)$ of $U$.  We often omit ``of $U$ in $G$'' when umbrella and
super-graph are clear from the context.

%\begin{figure}[ht!]
%\centering
%\includegraphics[width=0.6\linewidth]{HangingSubgraphExampleFigure.pdf}
%\caption{A graph with root-edge $(u,v)$, root umbrella $U$, anchor-edge $(x,y)$, and hanging subgraph $S$. \label{fig:HangingSubgraphExample}}
%\end{figure}
%
%See also Figure \ref{fig:HangingSubgraphExample}. Given any maximal outerplanar graph $G$ and a root-edge $(u,v)$ of $G$, we can partition $G$ into a collection of umbrellas in the following manner.

\begin{definition} \label{def:UmbrellaSystem}
Let $G$ be a maximal outerplanar graph with $n\geq 3$, and let $(u,v)$ 
be a non-cutting edge of $G$.
An \emph{umbrella system $\mathcal{U}$ on $G$ with root-edge
$(u,v)$} is a collection 
$\mathcal{U}=\{U_0\} \cup \mathcal{U}_1 \cup \dots \cup \mathcal{U}_k$
of subgraphs of $G$ for some $k\geq 0$ that satisfy the following:
\begin{enumerate}
\item $U_0$ (the \emph{root umbrella}) is an umbrella with cap $(u,v)$,
% and contains all neighbors of $u$ and $v$ in $G$,
\item $U_0$ has $k$ anchor-edges. We denote them by $(u_i,v_i)$ for $i=1,\dots,k$, and let $S_i$ be the hanging subgraph with respect to $(u_i,v_i)$.
\item For $i=1,\dots,k$, $\mathcal{U}_i$ (the {\em hanging umbrella system}) 
	is an umbrella system 
	of $S_i$ with root-edge $(u_i,v_i)$.
\end{enumerate}
The {\em depth} of such an umbrella system is $d(\mathcal{U}):=
1+\max_{i} d(\mathcal{U}_i)$.  
\end{definition}

\begin{figure}[ht!]
\centering
\hspace*{\fill}
\begin{subfigure}[b]{0.4\linewidth}
	\includegraphics[width=\linewidth,page=1]{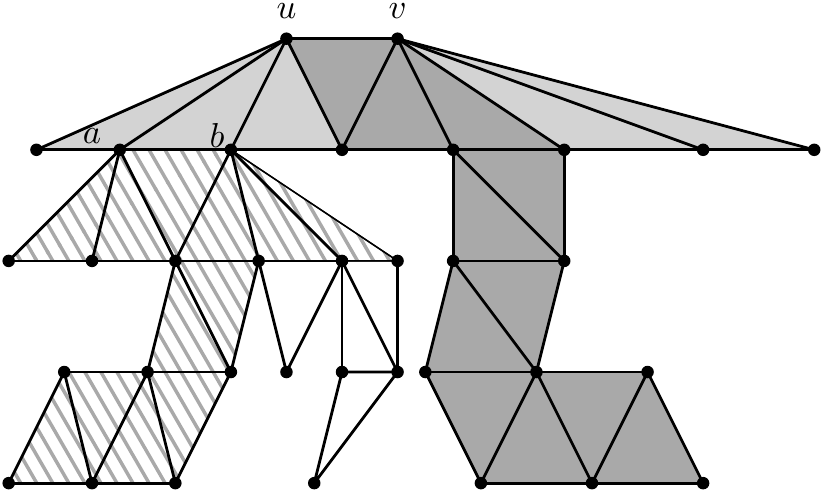}
	\caption{}
\end{subfigure}
\hspace*{\fill}
\begin{subfigure}[b]{0.4\linewidth}
	\includegraphics[width=\linewidth,page=2]{UmbrellaSystemExampleFigure.pdf}
	\caption{}
\end{subfigure}
\hspace*{\fill}
\caption{(a) An umbrella system of depth 3.  The root umbrella is shaded, with its handle darker shaded.  (b) The same graph has a bonnet system of depth 2, with the root bonnet shaded and its ribbon darker shaded.   }
\label{fig:UmbrellaSystemExample}
\label{fig:BonnetSystemExample}
\end{figure}

%See also Figure \ref{fig:UmbrellaSystemExample}. 
Define $ud(G;u,v)$
(the {\em (rooted) umbrella depth} of $G$) to be the maximum depth over
all umbrella systems with root-edge $(u,v)$.  
%We use $ud(G)$ or $ud(u,v)$ if the root-edge or graph is clear from the context.  
Note that the umbrella depth
depends on the choice of the root-edge; define the {\em free
umbrella depth} $ud^{\mathit free}(G)$ to be the minimum umbrella depth 
over all possible root-edges.
(One can show that the free umbrella depth is at most one unit less than
the rooted umbrella depth for any choice of root-edge; see the appendix.)

\medskip\noindent{\bf Bonnets: } A {\em bonnet} is a generalization of an umbrella
that allows two handles, as long as they go to different sides of the face $(u,v)$.
Thus, condition (1) of the definition of an umbrella gets replaced by 
\begin{quotation}
(1') $P$ (the {\em ribbon}) 
connects two non-cutting edges and contains $u$, $v$ and their
common neighbour.
\end{quotation}
Other than that, bonnets are defined exactly like umbrellas.  See also
Figure~\ref{fig:BonnetSystemExample}(b).  We can also define a {\em bonnet system},
{\em root bonnet}, etc., exactly as for
an umbrella system, except that ``bonnet'' is
substituted for ``umbrella'' everywhere.  
Let $bd(G;u,v)$ (the {\em rooted bonnet-depth
of $G$}) be the minimum possible depth of
a bonnet system with root-edge $(u,v)$, and let $bd^{\mathit free}(G)$ be
the minimum bonnet-depth over all choices of root-edge.  Since any umbrella
is a bonnet, clearly $bd(G;u,v)\leq ud(G;u,v)$ for all root-edges $(u,v)$.

We would like to emphasize that the root bonnet $U_0$ of a 
bonnet system must contain {\em all}
edges incident to the ends $u,v$ of the root-edge.  If follows that no
edge incident to $u$ or $v$ can be an anchor-edge of $U_0$, else the
hanging subgraph at it would contain further neighbours of $u$ (resp.~$v$).
We note this trivial but useful fact for future reference:
\begin{observation} 
\label{obs:HangingSubgraphLocations}
In a bonnet system with root-edge $(u,v)$, no edge incident to $u$ or $v$ is an anchor-edge of the root bonnet.
\end{observation}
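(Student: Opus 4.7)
The plan is to prove the statement by contradiction, using condition (4) of the umbrella/bonnet definition, which says that all neighbours of $u$ and $v$ must belong to the root bonnet $U_0$. By symmetry, it suffices to handle an edge incident to $u$, so suppose for contradiction that some edge $(u,w)$ is an anchor-edge of $U_0$.

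First I would unpack the definitions. By the definition of anchor-edge, $(u,w)$ is a cutting edge of $G$ but not of $U_0$, and the hanging subgraph $S_{u,w}$ is the cut-component of $G$ with respect to $(u,w)$ that does not contain the cap $(u,v)$. As a cut-component in a maximal outerplanar graph, $S_{u,w}$ is itself maximal outerplanar on at least $3$ vertices, and the edge $(u,w)$ lies on its outer cycle (one of its two incident interior faces in $G$ belonged to the other cut-component and has been removed, so that side of $(u,w)$ is now exterior in $S_{u,w}$).

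Next I would produce the forbidden neighbour. Since $u$ lies on the outer cycle of $S_{u,w}$, which has length at least $3$, it has a second outer-cycle neighbour $z \neq w$ in $S_{u,w}$. Then $z$ is a neighbour of $u$ in $G$, so by condition (4) applied to $U_0$ we must have $z \in U_0$. On the other hand, because $(u,w)$ is a cutting edge of $G$ but not a cutting edge of $U_0$, the subgraph $U_0$ lies entirely on the cap-side of this cut, meeting $S_{u,w}$ only in the two vertices $u$ and $w$. Since $z \in S_{u,w} \setminus \{u,w\}$, this forces $z \notin U_0$, a contradiction.

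The only non-trivial step is the last one, namely the argument that $U_0$ can share only $\{u,w\}$ with $S_{u,w}$; this follows from the fact that $(u,w)$ is not a cutting edge of $U_0$ combined with the observation that any vertex of $U_0$ in $S_{u,w} \setminus \{u,w\}$ would have to be reached from the cap-side without going through $(u,w)$, contradicting that $(u,w)$ separates $G$ into the two cut-components. Beyond this small topological point, the proof is a direct verification from the definitions.
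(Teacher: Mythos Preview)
Your argument is correct and matches the paper's reasoning exactly: the paper's justification (given in the paragraph immediately preceding the observation) is simply that an anchor-edge at $u$ would yield a hanging subgraph containing further neighbours of $u$, contradicting condition~(4). You have spelled out in detail the two points the paper leaves implicit---that the hanging subgraph contributes a new neighbour $z$ of $u$, and that $U_0$ cannot reach $z$ since $U_0-\{u,w\}$ is connected and lies on the cap-side of the cut---but the underlying idea is identical.
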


%======================================================================
\section{From Bonnet System to Drawing}\label{chap:UpperBound}
%======================================================================

In this chapter, we show how to create a flat visibility representation, given a maximal outerplanar graph $G$ and a bonnet system of $G$. The drawings we create will not be in the standard embedding of $G$,
as we will place drawings of hanging subgraphs
inside an inner face of the root bonnet. 
For merging purposes, we draw the root-edge $(u,v)$ in a special way: It
\emph{spans} the top layer, which means that $u$ touches the top left corner of the drawing, and $v$ touches the top right corner, or vice versa (see for example Figure \ref{fig:HandleStandardEmbedding}(b)).
We first explain how to draw the root bonnet.

\begin{lemma} \label{lem:OneUmbrellaConstruction}
Let $U_0$ be the root bonnet of a bonnet system with root-edge $(u,v)$. Then there exists a flat visibility representation $\Gamma$ of $U_0$ on three layers such that
\begin{enumerate}
\item $(u,v)$ spans the top layer of $\Gamma$.
\item Any anchor-edge of $U_0$ is drawn horizontally in the middle or bottom layer.
\end{enumerate}
\end{lemma}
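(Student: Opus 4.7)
The plan is to construct the representation $\Gamma$ directly by exploiting the decomposition $U_0 = P \cup F_1 \cup F_2$. Since $(u,v)$ is non-cutting in $G$, the triangle $uvw$ (with $w$ the unique common neighbour of $u$ and $v$) is an interior face of both $G$ and $U_0$, and the ribbon $P$ decomposes as $uvw$ together with two (possibly empty) outerplanar-path ``extensions'' $P_L$ and $P_R$ attached to $uvw$ along the edges $(u,w)$ and $(v,w)$. By the bonnet axioms, each fan $F_i$ shares exactly one edge with $P$; call these $(u,a)$ and $(v,b)$, where $a$ and $b$ are neighbours of $u$ and $v$ in $P$.

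For the layout, I would place $u$ at the top-left and $v$ at the top-right of the top layer, with $(u,v)$ filling the gap between them; this gives condition~(1) immediately. Then $w$ goes on the middle layer with $x$-extent overlapping both $u$ and $v$ so that $(u,w)$ and $(v,w)$ can be drawn as vertical segments. Next I would draw each extension $P_L, P_R$ on the middle and bottom layers as a two-layer zigzag triangulation, chosen so that for every triangle of the extension its rim edge (the edge of the triangle that is not shared with an adjacent triangle of $P$) has its two endpoints on the same layer and is therefore drawn horizontally. Finally, I would draw each fan $F_i$ with all its ``new'' rim vertices on the bottom layer: edges from the apex ($u$ or $v$) drawn vertically from the top across the middle layer to the bottom, consecutive rim-to-rim edges drawn horizontally along the bottom, and the single bridging edge from the shared vertex $a$ (resp.\ $b$) on the middle layer down to the first new rim vertex drawn vertically.

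The main obstacle is condition~(2). Observation~\ref{obs:HangingSubgraphLocations} rules out anchor edges incident to $u$ or $v$, so every anchor edge of $U_0$ is either (i) a rim edge of $F_1$ or $F_2$---horizontal on the bottom by construction---or (ii) a rim edge of a triangle of $P$. Interior triangles of the ribbon have a unique rim edge, which the zigzag rule makes horizontal. The delicate case is the \emph{end} triangles of the ribbon and of each extension: these carry two rim edges, and a 2-layer zigzag can force at most one of them to be horizontal. Here condition~(1') of the bonnet definition rescues the construction: each end of the ribbon is, by assumption, incident to a non-cutting edge of $G$, which cannot be a cutting edge of $G$ and hence cannot be an anchor. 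I can therefore orient the end triangle so that this non-cutting edge is the vertical rim edge, leaving the other rim edge (which may well be an anchor) horizontal on the middle or bottom layer. A routine check that the layering propagates consistently across the shared edges as we walk along each extension, together with straightforward handling of degenerate cases (empty extension or empty fan), then yields the desired three-layer drawing.
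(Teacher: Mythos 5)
Your treatment of the ribbon is essentially the paper's: a zigzag layout in which every internal triangle's unique rim edge is horizontal, with the freedom at the two end triangles resolved by making the non-cutting edge (or, via Observation~\ref{obs:HangingSubgraphLocations}, an edge incident to $u$ or $v$) the vertical one. The gap is in the fans. The bridging edge $(a,c_1)$ from the shared rim vertex $a$ down to the first new fan vertex $c_1$ is itself a rim edge of $U_0$ that is incident to neither $u$ nor $v$, so Observation~\ref{obs:HangingSubgraphLocations} does not protect it, and it can perfectly well be a cutting edge of $G$: take $G$ to be a fan at $u$ with rim $v=d_0,d_1,\dots,d_5$, ribbon equal to the single face $uvd_1$, and an extra triangle glued onto $(d_1,d_2)$. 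Then $(d_1,d_2)$ is the bridging edge of the fan at $u$ and is an anchor-edge of $U_0$, yet your construction draws it vertically from the middle to the bottom layer. This directly violates condition~(2), and your case analysis (``every anchor edge is a rim edge of a fan, horizontal on the bottom by construction, or a rim edge of $P$'') silently excludes exactly this edge. A second, related problem is planarity of the apex edges: you route $(u,d_i)$ vertically from the top layer across the middle layer to the bottom layer, while simultaneously requiring $c_1$ to sit at the same $x$-coordinate as $a$ so that the bridging edge is vertical; but the middle layer in that $x$-range is occupied by $a$ and its neighbours on the ribbon, so these segments cross the drawing in general.

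The fix is to put the new fan vertices on the \emph{same} layer as the shared vertex $a$, so that \emph{all} rim edges of the fan, including the bridging one, are horizontal. This is what the paper's release operation buys: when the edge $(u,a)$ shared by fan and ribbon was horizontal in the two-layer ribbon drawing, releasing $(u,v)$ turns it vertical and vacates a stretch of the middle layer next to $a$, into which the fan rim is inserted horizontally with apex edges reaching only from the top to the middle layer; when $(u,a)$ is the outermost vertical edge, the fan rim is appended on $a$'s own layer beyond the end of the drawing, where the intermediate layer is empty. Without an argument of this kind your construction does not establish condition~(2), which is precisely the property Theorem~\ref{thm:UpperBound} needs in order to merge the hanging subgraphs.
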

\begin{proof}
As a first step, we draw the ribbon of $U_0$
on 2 layers in such a way that $(u,v)$ 
and all anchor-edges are drawn horizontally;
see Figure \ref{fig:ReleaseEdgeExample}(a) for an illustration. 
(This part is identical to \cite{Bie-WAOA12}.)
Consider the standard embedding of $P$ in which the
dual tree is a path, say it consists of faces $f_1,\dots,f_k$.  We draw
$k+1$ vertical edges between two layers, with the goal that the region
between two consecutive ones belong to $f_1,\dots,f_k$ in this order.
Place $u$ and $v$ as segments in the top layer, and with an $x$-range such
that they touch all the regions of faces that $u$ and $v$ are incident to.
Similarly create segments for all other vertices.  The placement for the
vertices is uniquely determined by the standard planar embedding,
except for the vertices incident to $f_1$ and $f_k$. 
We place those vertices
such that the leftmost/rightmost vertical edge is not an anchor-edge.
To see that this is possible, 
recall that $P$ connects two non-cutting edges $e_1,e_2$ of $G$ that are
incident to $f_1$ and $f_k$.  If $e_1\neq (u,v)$, then choose the layer
for the vertices of $f_1$ such that $e_1$ is drawn vertical.
If $e_1=(u,v)$,  then one of its ends (say $u$) is the degree-2 vertex on 
$f_1$ and drawn in the top-left corner.  
The other edge $e'$ incident to $u$ is not an anchor-edge of $U$ by
Observation~\ref{obs:HangingSubgraphLocations}, and we draw $e'$ vertically.
So the leftmost vertical edge is either a non-cutting edge (hence not
an anchor-edge) or edge $e'$ (which is not an anchor-edge).  We proceed
similarly at $f_k$ so that the rightmost vertical edge is not an anchor-edge.
Finally all other vertical edges are cutting edges of $U_0$ and hence not
anchor-edges.

The drawing of $P$ obtained in this first step has $(u,v)$ in the top layer.
As a second step, we now {\em release}
$(u,v)$ as in \cite{Bie-WAOA12}.  This operation can be applied to any
edge that is drawn  horizontally in the top layer of a flat visibility
representation. It consists of adding a layer above
the drawing, moving $(u,v)$ into it, and re-routing edges by expanding
vertical ones at $u$ and $v$, and turning horizontal ones into vertical ones.  
In the result, $(u,v)$ spans the top layer. 
%To do this, first place $(u,v)$ in the new layer, then, if $u$ is to the left of $v$, extend $u$ so it reaches the top left corner of $\Gamma$, and extend $v$ so it reaches the top right corner. Do the opposite if $u$ is to the right of $v$. For any neighbor $w$ of $u$, $w$ was connected to $u$ by either a vertical or a horizontal line before $u$ was moved. In the former case, simply extend the existing line so it reaches $u$ in the new top layer. In the latter case, replace the horizontal line with a vertical line connecting $w$ and $u$. This is possible because $\Gamma$ is a flat visibility representation, and therefore $w$ and $u$ must be in the same layer if they are connected by a horizontal edge. In a similar manner, reconnect $v$ with all of its neighbors. 
See Figure \ref{fig:ReleaseEdgeExample}(b) for an illustration and \cite{Bie-WAOA12} for details.

\begin{figure}[ht!]
\hspace*{\fill}
\begin{subfigure}[b]{0.25\linewidth}
	\includegraphics[scale=0.5,trim=90 0 80 0,clip,page=1]{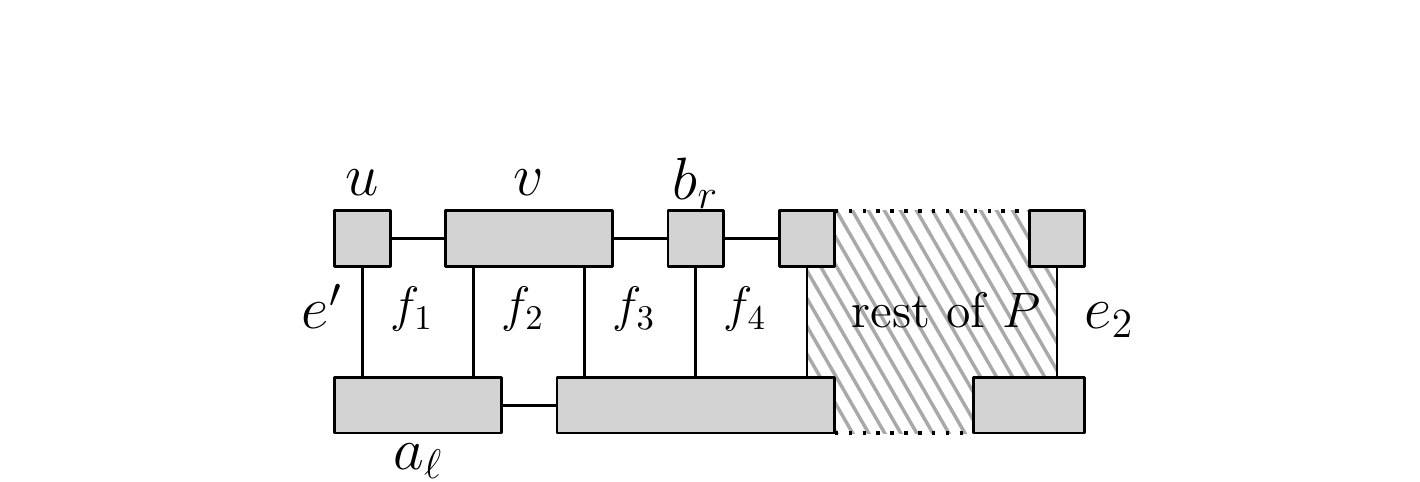}
	\caption{Drawing the ribbon.}
\end{subfigure}
\hspace*{\fill}
\begin{subfigure}[b]{0.25\linewidth}
	\includegraphics[scale=0.5,trim=90 0 80 0,clip,page=2]{ReleaseEdgeExampleFigure.pdf}
	\caption{Releasing $(u,v)$.}
\end{subfigure}
\hspace*{\fill}
\\[2ex]
\begin{subfigure}[b]{0.4\linewidth}
	\includegraphics[scale=0.5,page=3]{ReleaseEdgeExampleFigure.pdf}
	\caption{Adding the fans.}
\end{subfigure}
\hspace*{\fill}
\begin{subfigure}[b]{0.4\linewidth}
	\includegraphics[scale=0.5,page=3]{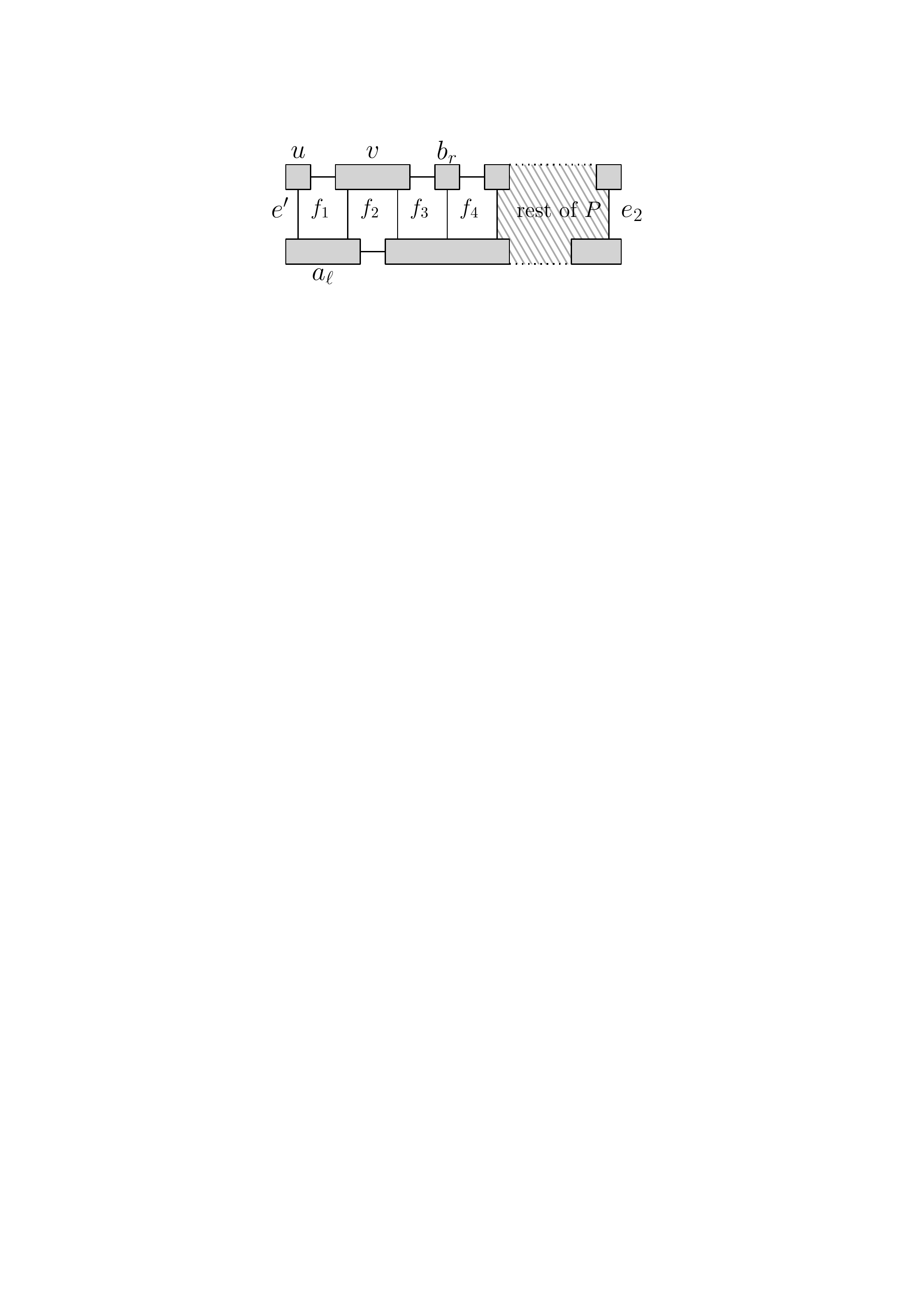}
	\caption{Merging hanging subgraphs.}
\end{subfigure}
\hspace*{\fill}
%\caption{Drawing the ribbon, releasing edge $(u,v)$, inserting the fans,
%and merging the hanging subgraphs.}
\caption{From bonnet system to drawing.}
\label{fig:ReleaseEdgeExample}
\label{fig:HandleStandardEmbedding}
\label{fig:UmbrellaConstructionBaseCasePart2}
\label{fig:P2InsertSubsystems}
\end{figure} 

We now have a drawing of the ribbon $P$ on 3 layers where $(u,v)$ spans the top,
say $u$ is in the top left corner.  As the third and final step,
we add the remaining vertices of the two fans.  
Consider the fan $F_1$ at $u$, and let $(u,a_\ell)$ be the edge that it
has in common with the ribbon $P$. We have two possible cases.  If
$(u,a_\ell)$ is the leftmost vertical edge, then simply place the 
vertices of the fan to the left of $a_\ell$ and extend $u$ (see 
Figure \ref{fig:UmbrellaConstructionBaseCasePart2}(c)).  Else,
$(u,a_\ell)$ was drawn horizontally in the drawing of the first step (because
it reflects the planar embedding), and therefore after releasing $(u,v)$
there is space to the right of $a_\ell$.  Into this space we insert the
remaining vertices of the fan at $u$.
The fan at $v$ is added in a symmetric fashion.
Figure \ref{fig:UmbrellaConstructionBaseCasePart2}(c) illustrates
the second case for the fan at $v$.

This finishes the construction of the drawing of $U_0$.
It remains to show that all anchor-edges are horizontal and in the bottom two
layers. We ensured
that this is the case in the first step.  Releasing $(u,v)$ adds more vertical 
edges, but all of them are incident to $u$ or $v$ and not anchor-edges by
Observation~\ref{obs:HangingSubgraphLocations}. 
Likewise, all vertical edges added when inserting the fans are incident to $u$ 
or $v$.  The only  horizontal edge in the top layer is $(u,v)$, which is not
an anchor-edge.
This finished the proof of Lemma \ref{lem:OneUmbrellaConstruction}.
\end{proof}

Now we explain how to merge hanging subgraphs.

\begin{theorem} \label{thm:UpperBound}
Any maximal outerplanar graph $G$ has a planar flat visibility representation of height at most $2bd^{\mathit{free}}(G)+1$.
%\begin{lemma} 
\label{lem:DrawingGivenPathSystem}
\end{theorem}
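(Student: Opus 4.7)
The plan is to prove by induction on $d$ the following strengthened statement: if $G$ is a maximal outerplanar graph with a bonnet system of depth $d$ and root-edge $(u,v)$, then $G$ admits a planar flat visibility representation of height at most $2d+1$ in which $(u,v)$ spans the top layer. Theorem~\ref{thm:UpperBound} follows by instantiating this with the root-edge and bonnet system that realize $bd^{\mathit{free}}(G)$.

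For the base case $d=1$, the bonnet system consists only of the root bonnet $U_0$, which must equal $G$ because it has no anchor-edges, so Lemma~\ref{lem:OneUmbrellaConstruction} directly produces a drawing of height $3 = 2\cdot 1 + 1$ with $(u,v)$ spanning the top. For the inductive step, let $\mathcal{U}=\{U_0\}\cup\mathcal{U}_1\cup\cdots\cup\mathcal{U}_k$ be a bonnet system of depth $d\geq 2$ with root-edge $(u,v)$, anchor-edges $(u_i,v_i)$, hanging subgraphs $S_i$, and hanging bonnet systems $\mathcal{U}_i$ of depth at most $d-1$. Apply Lemma~\ref{lem:OneUmbrellaConstruction} to obtain a 3-layer drawing $\Gamma_0$ of $U_0$ with $(u,v)$ spanning the top and every anchor-edge drawn horizontally in layer 2 or layer 3. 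Inductively, for each $i$ obtain a drawing $\Gamma_i$ of $S_i$ of height at most $2(d-1)+1 = 2d-1$ in which $(u_i,v_i)$ spans the top layer.

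To merge, identify the top layer of each $\Gamma_i$ with the row of $\Gamma_0$ in which the horizontal anchor-edge $(u_i,v_i)$ lies, placing the rest of $\Gamma_i$ in fresh layers directly below. Since each $\Gamma_i$ contributes at most $2d-2$ further rows below layer 2 or 3 of $\Gamma_0$, the merged drawing uses at most $3 + (2d-2) = 2d+1$ layers; the edge $(u,v)$ still spans the top since merging only adds rows below. The endpoints of $(u_i,v_i)$ in $\Gamma_i$ are at the two corners of its top row (because it spans), which after identification match exactly the endpoints of the horizontal anchor-segment in $\Gamma_0$, so the interface is consistent. Horizontal widening (which adds no layers and is permitted in a flat visibility representation) lets us reserve, for each $i$, a vertical strip directly below the anchor-segment large enough to hold $\Gamma_i$.

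The main obstacle is verifying planarity of the merged drawing. The key geometric fact is that in $\Gamma_0$ the anchor-edge $(u_i,v_i)$ is flanked on its left and right by vertical segments of $\Gamma_0$ (either cutting edges of $U_0$ or edges incident to $u,v$), and these bound a vertical strip whose interior, below the anchor-segment, is entirely free of $\Gamma_0$-material; this is precisely where $\Gamma_i$ is placed. Distinct anchor-edges lie in disjoint such strips, so the $\Gamma_i$'s can be widened independently without interfering with each other or with $\Gamma_0$. Combined with the inductive guarantee that $(u_i,v_i)$ spans the top of $\Gamma_i$, this gives a planar flat visibility representation of $G$ of height at most $2d+1$, completing the induction and proving Theorem~\ref{thm:UpperBound}.
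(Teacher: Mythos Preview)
Your overall strategy matches the paper's: induct on the depth of a bonnet system, use Lemma~\ref{lem:OneUmbrellaConstruction} to draw the root bonnet on three layers with $(u,v)$ spanning the top, recursively draw each hanging subgraph with its anchor-edge spanning the top, and merge. The height arithmetic $3+(2d-2)=2d+1$ is also the right target. However, the merging step as you describe it does not work, and the ``key geometric fact'' you rely on is false.

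Concretely, consider an anchor-edge $(u_i,v_i)$ that lies in the \emph{middle} layer of $\Gamma_0$. In the two-layer drawing of the ribbon this edge was a horizontal outer-face edge in the top row, and the third vertex $c$ of the triangular face $\{u_i,v_i,c\}$ of $U_0$ sits in the bottom row directly beneath it, flanked by the vertical edges $(u_i,c)$ and $(v_i,c)$. After releasing $(u,v)$ this configuration is unchanged in layers~2 and~3. Thus the vertical strip below the anchor-segment is \emph{not} free of $\Gamma_0$-material: vertex $c$ (and possibly more of the bottom row) sits there. Hanging $\Gamma_i$ straight down from layer~2 collides with layer~3 of $\Gamma_0$, and horizontal widening cannot remove $c$ from that strip. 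If instead you insert the $2d-2$ fresh rows below layer~3, then middle-layer sub-drawings still must cross layer~3; if you insert them between layers~2 and~3, then bottom-layer anchor-edges move to row $2d+1$ and their sub-drawings, hung downward, would need rows $2d+2,\dots,4d-1$.

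The fix (and what the paper does) is to insert the $2d-2$ fresh rows \emph{between} the middle and bottom layers of $\Gamma_0$, and then treat the two anchor-edge locations differently: for a middle-layer anchor-edge, hang $\Gamma_i$ downward into the fresh rows; for a bottom-layer anchor-edge (now in row $2d+1$), rotate $\Gamma_i$ so that its root-edge spans its \emph{bottom} row and place it going \emph{upward} into the same fresh rows. Both kinds of sub-drawings then share the rows $2,\dots,2d+1$, the flanking vertical edges of $\Gamma_0$ (now stretched across all fresh rows) separate the strips, and the total height stays $2d+1$. Your argument is missing exactly this case distinction and the upside-down placement for bottom-layer anchors.
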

\begin{proof}
We show by induction that any graph with a bonnet system
$\mathcal{U}$ of depth $H$ has a drawing $\Gamma$ of height $2H+1$ where 
the root-edge $(u,v)$ spans the top layer.  This proves the theorem when
applying it to a bonnet system $\mathcal{U}$ of depth $bd^{\mathit free}(G)$.  

Let $U_0$ be the root bonnet of the bonnet system, and
draw $U_0$ on 3 layers using Lemma \ref{lem:OneUmbrellaConstruction}.
Thus $(u,v)$ spans the top and any anchor-edge $(a,b)$ of $U_0$
is drawn as a horizontal edge in the bottom two layers of $\Gamma_0$.
If $H=1$ then we are done.  Else
add $2H-2$ layers to $\Gamma_0$ between the middle and bottom layers. 
%More precisely, if there are $k$ hanging subgraphs $S_1, S_2, \dots, S_k$, then it suffices to add $\max_{1 \le i \le k} \left( \text{height of }S_i \right)-1$ layers to $\Gamma_0$. 
For each anchor-edge $(a,b)$ of $U_0$,
the hanging subgraph $S_{a,b}$ of $U_0$ has a bonnet system of depth at most 
$H-1$ with root-edge $(a,b)$.  By induction $S_{a,b}$ has a drawing $\Gamma_1$ 
on at most $2H - 1$ layers with $(a,b)$ spanning the top layer. 

%\begin{enumerate}
%\item 
If $(a,b)$ is in the bottom layer of $\Gamma_0$, then we can rotate (and reflect, if necessary) $\Gamma_1$ so that $(a,b)$ is in the bottom layer of $\Gamma_1$ and the left-to-right order of $a$ and $b$ in $\Gamma_1$ is the same as their left-to-right order in $\Gamma_0$. This updated drawing of $\Gamma_1$ can then be inserted in the space between $(a,b)$ in $\Gamma_0$. This fits because $\Gamma_1$ has height at most $2H-1$, and in the insertion process we can re-use the layer spanned by $(a,b)$.
%\item 
If $(a,b)$ is in the middle layer of $U_0$, then we can reflect $\Gamma_1$ (if necessary) so that $(a,b)$ has the same left-to-right order in $\Gamma_1$ as in $\Gamma_0$. This updated drawing of $\Gamma_1$ can then be inserted in the space between $(a,b)$ in $\Gamma_0$.
%\end{enumerate}
%Inserting all hanging subgraph drawings through one of the cases above completes the drawing. 
See Figure \ref{fig:P2InsertSubsystems}(d).
Since we added $2H-2$ layers to a drawing of height 3, the total height of the final drawing is $2H + 1$ as desired.
\end{proof}

Our proof is algorithmic, and finds a drawing, given a bonnet system,
in linear time.  One can also show (see the appendix) that the rooted bonnet
depth, and an associated bonnet system, can be found in linear time
using dynamic programming.  Hence the run-time to find this drawing
is linear.

\medskip\noindent{\bf Comparison to \cite{Bie-WAOA12}: }
The algorithm in \cite{Bie-WAOA12} has only two small difference.  The main
one is that it does not do the ``third step'' when drawing the root umbrella,
thus it draws the ribbon but not the fans.  Thus in the induction step our
algorithm always draws at least as much as the one in \cite{Bie-WAOA12}. 
Secondly, \cite{Bie-WAOA12} uses a special construction
if $pw(G)=1$ to save a constant number of levels.  This could easily be 
done for our algorithm as well in the case where $pw(G)=1$ but $bd(G)=2$.
As such, our construction never has worse height (and frequently it is better).

\medskip\noindent{\bf Comparison to \cite{Bie-GD02}: }  One can argue
that $bd(G)\leq \log(n+1)$ (see the appendix).  Since \cite{Bie-GD02}
uses $3\log n-1$ levels while ours uses $2bd(G)+1\leq 2\log(n+1)+1$ levels,
the upper bound on the height is better for $n\geq 9$.

%======================================================================
\section{From Drawing to Umbrella System}\label{chap:LowerBound}
%======================================================================

We now argue that any flat visibility representation of height $H$ gives rise 
to an umbrella system of depth at most $H-1$, proving a lower bound.  
We first briefly sketch the idea.  We assume that we have a drawing such
that for some non-cutting edge $(u,v)$ we have an ``escape path'', i.e., 
a poly-line
to the outerface that does not intersect the drawing. 
Now find an outerplanar path that connects
the leftmost vertical edge $(x,y)$ of the drawing with $(u,v)$.  This
becomes the handle of an umbrella $U$ with cap $(u,v)$, and the fans consist
of all remaining neighbours of $u$ and $v$.  One can now argue that any
hanging subgraph of $U$ is drawn with height at most $H-1$, and furthermore, has
an escape path from its anchor-edge.  The claim then holds by induction.

%\subsection{Escape paths and left-free edges} 

We first must clarify some definitions.  
Let $\Gamma$ be a flat visibility representation, and let $B_{\Gamma}$ be a minimum-height bounding box of $\Gamma$. 
A vertex $v \in G$ has a \emph{left escape path} in $\Gamma$ if there exists a polyline inside $B_{\Gamma}$ from $v$ to a point on the left side of $B_{\Gamma}$ that is vertex-disjoint from $\Gamma$ except at $v$, and for which all bends are on layers. We say that $(\ell_1,\ell_2)$ is a \emph{left-free edge} of $\Gamma$ if it is vertical, and any layer intersected by $(\ell_1,\ell_2)$ is empty
to the left of $(\ell_1,\ell_2)$.  In particular, both $\ell_1$ and $\ell_2$ have a left escape path by going leftwards in their respective layers.
Define \emph{right escape paths} and \emph{right-free edges} symmetrically; we use {\em escape path} for either a left escape path or a right escape path. See 
Figure \ref{fig:LeftmostRightmostExample}(a). 

\begin{figure}[ht!]
\hspace*{\fill}
\begin{subfigure}[b]{0.4\linewidth}
	\includegraphics[scale=0.45,page=1]{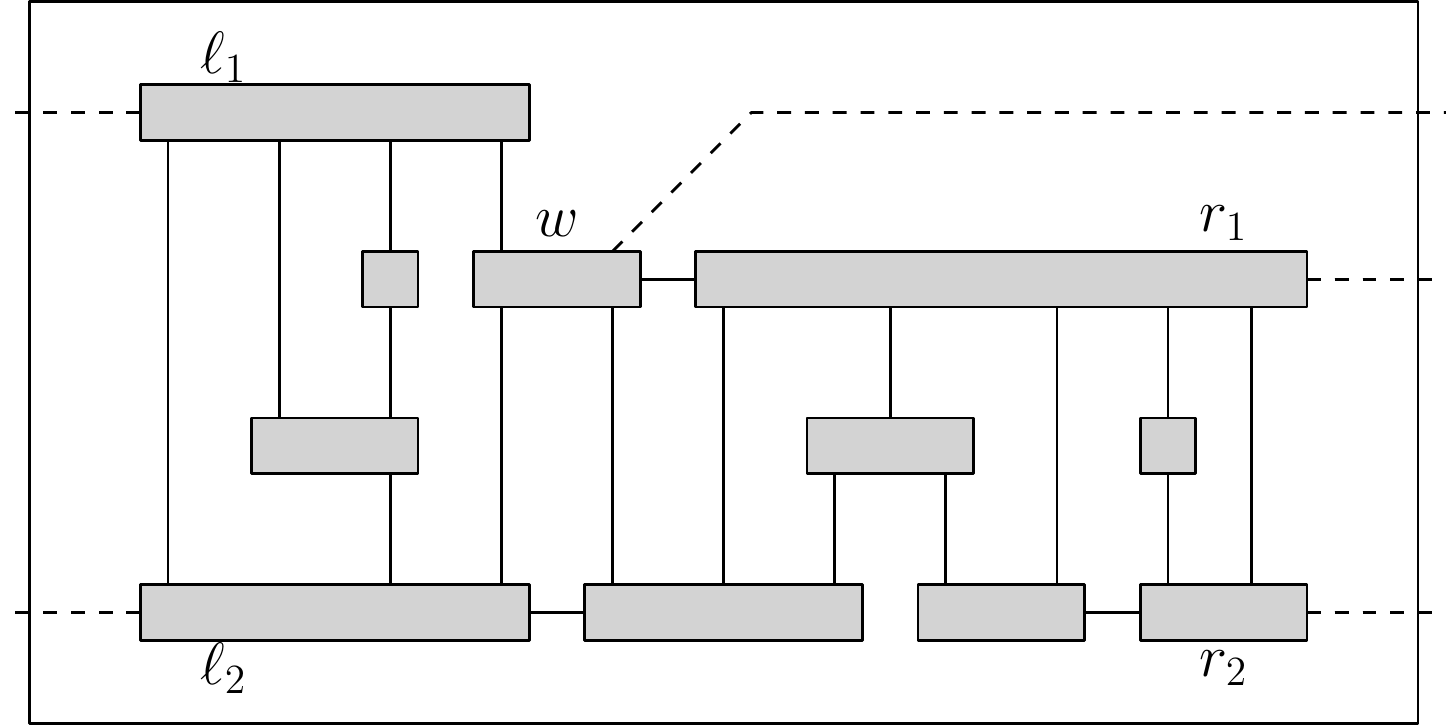}
	\caption{}
\end{subfigure}
\hspace*{\fill}
\begin{subfigure}[b]{0.55\linewidth}
	\includegraphics[scale=0.45,page=2]{LeftmostRightmostExampleFigure.pdf}
	\caption{}
\end{subfigure}
\hspace*{\fill}
\caption{Vertex $w$ has a right escape path, $(\ell_1, \ell_2)$ is left-free 
and $(r_1,r_2)$ is right-free.  After flipping the cutting component at $(\ell_1,\ell_2)$, there is a non-cutting edge that is left-free.}
\label{fig:LeftmostRightmostExample}
\label{fig:NonCuttingLeftmostRightmost}
\end{figure}

It is easy to see that any flat visibility representation has a left-free
edge,  presuming the graph has minimum degree at least 2. 
Let $(v,w)$ be the leftmost vertical edge (breaking ties arbitrarily);
there must be such an edge because the leftmost vertex in each layer has
at most one horizontal incident edge, and therefore at least one vertical one.
In any layer spanned by $(v,w)$, no vertical edge is farther left by
assumption. No vertex can be farther left either, else the incident
vertical edge of the leftmost of them would be farther left.
So $(v,w)$ is left-free.

For the proof of the lower bound, we use as handle an outerplanar path 
connecting to a left-free edge.  Recall that the definition of handle 
requires that it connects to a non-cutting edge, so we need a left-free edge
that is not a cutting edge.  This does not exist in all drawings (see
e.g. Figure~\ref{fig:LeftmostRightmostExample}(a)),
but as we show now, we can modify the drawing without increasing height such
that such an edge exists.  To be able to apply it later, we must also show
that this modification does not destroy a given escape path.

\begin{lemma} \label{lem:LeftmostNonCuttingEdgeWithRightmost}
Let $\Gamma$ be a flat visibility representation of a maximal outerplanar graph $G$. 
\begin{enumerate}
\item Let $(r_1, r_2)$ be a right-free edge of $\Gamma$, and let $w$ be a vertex that has a right escape path. Then there exists a drawing $\Gamma'$ in which $w$ has a right escape path, $(r_1,r_2)$ is a right-free edge, and there exists a left-free edge that is not a cutting edge of $G$.
\item Let $(\ell_1, \ell_2)$ be a left-free edge of $\Gamma$, and let $w$ be a vertex that has a left escape path. Then there exists a drawing $\Gamma'$ in which $w$ has a left escape path, $(\ell_1,\ell_2)$ is a left-free edge, and there exists a right-free edge that is not a cutting edge of $G$.
\end{enumerate}
In either case, the $y$-coordinates of all vertices in $\Gamma$ are unchanged in $\Gamma'$, and in particular both drawings have the same height.
\end{lemma}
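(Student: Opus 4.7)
The plan is to prove part (1); part (2) follows symmetrically by horizontally reflecting $\Gamma$, applying part (1) to the reflected drawing, and reflecting back. First I would identify the leftmost vertical edge $(\ell_1,\ell_2)$ of $\Gamma$, which exists by the argument given just before the lemma and is automatically left-free. If $(\ell_1,\ell_2)$ is non-cutting in $G$, I set $\Gamma':=\Gamma$ and am done, since all three required conditions hold trivially. Otherwise $(\ell_1,\ell_2)$ is a cutting edge with two cut-components $G_A, G_B$. The key geometric observation is that left-free-ness of $(\ell_1,\ell_2)$ forces the left side of this edge inside $B_\Gamma$ to be part of the outer face of $\Gamma$'s embedding; consequently one cut-component, call it $G_B$, is drawn in an interior pocket bounded by $(\ell_1,\ell_2)$ and portions of $\ell_1,\ell_2$, confined entirely within the rows spanned by $(\ell_1,\ell_2)$. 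Both $w$ and $(r_1,r_2)$ then necessarily lie in $G_A$: any right escape from a vertex of $G_B$ would be blocked by $G_A$ enclosing the pocket, and any right-free edge of $G_B$ would similarly be blocked.

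Next I would reflect the drawing of $G_B$ (excluding the shared $\ell_1,\ell_2$) across the vertical line $x=x_0$ through $(\ell_1,\ell_2)$. Because the pocket's rows are a subset of the rows spanned by $(\ell_1,\ell_2)$ and those rows are empty to the left of $x_0$, the reflected drawing fits into the previously empty strip without overlap. The result is a valid flat visibility representation $\Gamma_1$ of $G$ with unchanged $y$-coordinates, hence the same height. The drawing of $G_A$ (including $(r_1,r_2)$, $w$, and all of $w$'s escape path) is untouched, and the right side of the bounding box is unchanged, so $(r_1,r_2)$ remains right-free and $w$ still has a right escape path in $\Gamma_1$.

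The new leftmost vertical edge $(\ell_1',\ell_2')$ of $\Gamma_1$ now lies strictly inside the flipped drawing of $G_B$. If it is non-cutting in $G$, take $\Gamma':=\Gamma_1$ and we are done. Otherwise it is a cutting edge of $G$ whose interior-pocket cut-component is a strict subgraph of $G_B$, and I iterate the same flip on this smaller inner subgraph. Since the vertex count of the inner subgraph strictly decreases at each iteration, the process terminates after finitely many steps, ending when the inner subgraph is small enough that its leftmost vertical edge is non-cutting (in the extreme, a triangle, whose edges are all non-cutting in $G$). At this termination we obtain the desired $\Gamma'$.

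The main obstacle is the geometric claim in the first step: that $G_B$ is drawn in an interior pocket whose row range is contained in that of $(\ell_1,\ell_2)$, and that $w$ together with its entire escape path lies in $G_A$. Both rely on a careful analysis of the planar embedding induced by $\Gamma$ at $(\ell_1,\ell_2)$, exploiting that left-free-ness forces one incident face of this edge to be part of the outer face of $\Gamma$ and that $G$ is outerplanar. Once this topological setup is rigorous, the reflection step is geometrically straightforward and the induction is routine.
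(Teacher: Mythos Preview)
Your approach is essentially the paper's: find the leftmost vertical edge $(\ell_1,\ell_2)$, flip one cut-component to the other side of it, and recurse on the smaller piece. The paper likewise proves only part~(1), by induction on $n$: it names $A$ to be the cut-component containing $w$, reflects the induced drawing $\Gamma_B$ of the other component, applies the induction hypothesis to $B$ alone, and then places the resulting $\Gamma_B''$ wholesale to the left of $\Gamma_A$, extending $\ell_1$ and $\ell_2$ to reconnect the two copies.

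Your specific pocket claim, however, is false as stated: the enclosed component need not be confined to the rows spanned by $(\ell_1,\ell_2)$. For a concrete counterexample take $\ell_1$ in row~$3$, $\ell_2$ in row~$4$, their common edge at $x=0$, a vertex $a\in G_A$ in row~$1$ with $(\ell_1,a)$ at $x=3$ and $(\ell_2,a)$ at $x=9$, and a vertex $b\in G_B$ in row~$2$ with $(\ell_1,b)$ at $x=5$ and $(\ell_2,b)$ at $x=6$. Here $G_B=\{\ell_1,\ell_2,b\}$ is the enclosed component (it sits inside the drawn triangle $\ell_1\ell_2 a$), yet $b$ lies in row~$2$, outside the rows $3$--$4$ spanned by $(\ell_1,\ell_2)$, and the pocket is bounded partly by $a$, not just by portions of $\ell_1,\ell_2$. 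Your in-place reflection still works in such cases, but only because \emph{all} of $\Gamma$ lies at $x\ge x_0$ (since $x_0$ is the leftmost vertical edge), so reflecting anything across $x=x_0$ lands in empty space---a different and simpler justification than the one you gave. The paper avoids this subtlety altogether by not reflecting in place: it shifts $\Gamma_B''$ far enough left that its $x$-range is disjoint from $\Gamma_A$'s, so no row-by-row emptiness check is needed at all.
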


\begin{proof}
We prove the claim by induction on $n$ and show only the first claim (the 
other is symmetric).  Let $(\ell_1, \ell_2)$ be the leftmost vertical edge 
of $\Gamma$; this is left-free as argued above.
If $(\ell_1,\ell_2)$ is not a cutting edge of $G$, then we are done with
$\Gamma'=\Gamma$.   In particular,
this is the case if $n=3$ when $G$ is a triangle and has no cutting edge.

So assume $n\geq 4$ and
$(\ell_1,\ell_2)$ is a cutting edge of $G$. 
Let $A$ and $B$ be the cut-components of $(\ell_1,\ell_2)$, named such 
that $w \in A$.
Let $\Gamma_A$ [resp.~$\Gamma_B$] be the drawing of $A$ [$B$] 
induced by $\Gamma$.
Edge $(\ell_1,\ell_2)$ is left-free for both $\Gamma_A$ and $\Gamma_B$.
Reflect $\Gamma_B$ horizontally (this makes $(\ell_1,\ell_2)$ right-free)
to obtain $\Gamma_B'$.
By induction, we can create a drawing $\Gamma''_B$ from $\Gamma'_B$ in which $(\ell_1,\ell_2)$ is right-free and there is a left-free edge $(\ell'_1,\ell'_2)$ that is not a cutting edge of $B$.   We have $(\ell'_1,\ell'_2)\neq (\ell_1,\ell_2)$, because the common neighbour of $\ell_1,\ell_2$ in $B$ forces a vertex or edge to reside to the left of the right-free edge $(\ell_1,\ell_2)$.  So
$(\ell_1',\ell_2')$ is not a cutting edge of $G$ either.

Create a new drawing that places $\Gamma''_B$ to the left of $\Gamma_A$ and extends $\ell_1$ and
$\ell_2$ to join the two copies; this is possible since $(\ell_1,\ell_2)$ has the same $y$-coordinates in $\Gamma_A,\Gamma,\Gamma_B$ and $\Gamma_B''$, and it is left-free in $\Gamma_A$ and right-free in $\Gamma''_B$.  Also delete one
copy of $(\ell_1,\ell_2)$.
See Figure \ref{fig:NonCuttingLeftmostRightmost}(b). 
The drawing $\Gamma_A$ is unchanged, so $w$ will have the same right escape path in $\Gamma'$ as in $\Gamma$, and $\Gamma'$ will have right-free edge $(r_1,r_2)$ and left-free non-cutting edge $(\ell'_1,\ell'_2)$, as desired. 
\end{proof}

%\subsection{Lower Bound}

We are now ready to prove the lower bound if there is an escape path.
%for the optimal height of drawings of maximal outerplanar graphs that meet certain conditions.

\begin{lemma} \label{lem:DrawingToPathSystem}
Let $\Gamma$ be a flat visibility representation of a maximal outerplanar graph $G$ with height $H$, and let $(u,v)$ be a non-cutting edge of $G$. If there exists an escape path from $u$ or $v$ in $\Gamma$, then $G$ has an umbrella system with root-edge $(u,v)$ and depth at most $H-1$.
\end{lemma}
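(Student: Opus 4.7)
I proceed by induction on $n = |V(G)|$. The base case $n=3$ is immediate: $G$ is a triangle, giving a trivial umbrella with $G$ itself as the handle and empty fans, so the umbrella system has depth $1$; since any drawing of a triangle uses $H \geq 2$ layers, we have $1 \leq H-1$.

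For the inductive step ($n \geq 4$), by reflection and relabeling we may assume without loss of generality that $u$ has a right escape path in $\Gamma$. Let $(r_1, r_2)$ be the rightmost vertical edge of $\Gamma$, which is right-free, and apply Lemma~\ref{lem:LeftmostNonCuttingEdgeWithRightmost}(1) with $w = u$ to obtain a drawing $\Gamma'$ of height $H$ in which $u$ still has a right escape path, $(r_1, r_2)$ is still right-free, and some left-free edge $(\ell_1, \ell_2)$ is \emph{not} a cutting edge of $G$. Both $(u,v)$ and $(\ell_1, \ell_2)$ are then non-cutting edges of $G$, so the preliminaries supply an outerplanar path $P$ of $G$ connecting them. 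Let $F_1$ be the fan at $u$ containing all neighbors of $u$, chosen so that exactly one edge incident to $u$ is shared with $P$, and let $F_2$ be analogous at $v$. Set $U_0 := P \cup F_1 \cup F_2$; by construction this is an umbrella with cap $(u,v)$ in the sense of Definition~\ref{def:Umbrella}.

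For each anchor-edge $(a, b)$ of $U_0$, let $S_{a,b}$ be the hanging subgraph and $\Gamma_{a,b}$ the sub-drawing of $\Gamma'$ induced by $S_{a,b}$. The crucial claim to establish is that $\Gamma_{a,b}$ is a flat visibility representation of $S_{a,b}$ of height at most $H - 1$ and that either $a$ or $b$ has an escape path in $\Gamma_{a,b}$. Granting this claim, and since $|S_{a,b}| < n$, the inductive hypothesis applied to $S_{a,b}$ with root-edge $(a,b)$ produces a hanging umbrella system $\mathcal{U}_{a,b}$ of depth at most $(H - 1) - 1 = H - 2$. Combining $U_0$ with these $\mathcal{U}_{a,b}$ yields an umbrella system of $G$ with root-edge $(u,v)$ of depth at most $1 + (H - 2) = H - 1$, as required.

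The main obstacle is the crucial claim above. The intuitive picture is that the drawn image of $P$, together with the right escape path from $u$ extended to the right side of the bounding box and the free space to the left of $(\ell_1,\ell_2)$ extended to the left side, forms a polyline crossing $\Gamma'$ from the left edge of the bounding box to the right edge. This polyline topologically separates $\Gamma'$ into two parts, forcing each hanging subgraph entirely into one of them, and it consumes at least one $y$-layer in the column occupied by $(a,b)$, giving the height bound $H-1$. For the escape path: because $(a,b)$ is not a cutting edge of $U_0$, it lies on the outer face of $U_0$ in the embedding induced by $\Gamma'$, and the outer-face side of $(a,b)$ (opposite $S_{a,b}$) supplies empty horizontal space leading out of the bounding box of $\Gamma_{a,b}$. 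Formalizing this requires a case analysis on where $(a,b)$ sits relative to the separator: whether it lies on the upper or lower side of $P$, whether $S_{a,b}$ hangs to the left or to the right, and whether $(a,b)$ is drawn horizontally or vertically in $\Gamma'$.
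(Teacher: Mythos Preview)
Your overall architecture matches the paper's: construct the root umbrella $U_0$ from an outerplanar path $P$ connecting $(u,v)$ to a left-free non-cutting edge, then show every hanging subgraph is drawn in height $\le H-1$ with an escape path from its anchor-edge, and recurse. The induction variable ($n$ versus $H$) is immaterial.

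Where you diverge from the paper is precisely at the ``crucial claim'', and your sketch there has a real gap. You propose a single separator, namely ``the drawn image of $P$'' together with the two escape paths. But the drawn image of $P$ is a two-dimensional region, not a polyline, and the anchor-edge $(a,b)$ lies \emph{on} its boundary; so $S_{a,b}$ is not disjoint from your separator, and the Jordan-curve argument does not apply directly. You acknowledge this by deferring to a ``case analysis on where $(a,b)$ sits relative to the separator'', but that case analysis is the entire difficulty, and you have not shown it goes through.

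The paper avoids this completely with one clean idea you are missing: it defines \emph{two} dividing polylines $\Pi_1,\Pi_2$, obtained by splitting the outer boundary of $U_0$ (in the standard embedding) at $\ell_1,\ell_2$ and at $v$ into two graph-paths $P_1,P_2$, and then taking $\Pi_i$ to be (left escape of $\ell_i$) $+$ (drawing of $P_i$) $+$ (right escape of $v$). Every anchor-edge lies on exactly one of $P_1,P_2$ and is not incident to $v$ or equal to $(\ell_1,\ell_2)$; hence $S_{a,b}$ is \emph{vertex-disjoint} from the other path, say $P_2$, and therefore $\Gamma_S$ is disjoint from the polyline $\Pi_2$. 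Now the Jordan argument is clean: $\Gamma_S$ lies entirely above or below $\Pi_2$, so it misses one layer. The escape path for $S_{a,b}$ is equally explicit: walk from $b$ along $P_1$ back to $\ell_1$, then take the left escape path of $\ell_1$; this route avoids $\Pi_2$ and hence stays within the $H-1$ available layers. No case analysis is needed. Your vague appeal to ``the outer-face side of $(a,b)$ supplies empty horizontal space'' is not a substitute for this construction.
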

\begin{proof}
%The remainder of this section is dedicated to the proof of Lemma \ref{lem:DrawingToPathSystem}. 
We proceed by induction on $H$. 
Assume without loss of generality that there exists a right escape path from $v$ (all other cases are symmetric). 
Using Lemma \ref{lem:LeftmostNonCuttingEdgeWithRightmost}, we can modify $\Gamma$ without increasing the height so that $v$ has a right escape path, and there 
is a left-free edge $(\ell_1,\ell_2)$ 
in $\Gamma$ that is a not a cutting edge of $G$. 
%Let $(\ell_1,\ell_2)$ be that edge, with the left escape path from $\ell_1$ touching the bounding box of $\Gamma$ above the left escape path from $\ell_2$. 
Let $P$ be the outerplanar path that connects edge $(\ell_1,\ell_2)$ and $(u,v)$.
Let $U_0$ be the union of $P$, the neighbors of $u$, and the neighbors of $v$;
%By the choice of $(\ell_1,\ell_2)$, $U_0$ is an umbrella for which the handle connects two non-cutting edges and the fans include all neighbours of the root-edges; hence $U_0$ can be used 
we use $U_0$ as the root umbrella of an umbrella system. 

\iffalse
Define $\Gamma_U$ to be the drawing of $U_0$ induced by $\Gamma$, and $\Gamma^*_U$ to be the standard planar embedding of $U_0$ in which all vertices are on the outer face. See also Figures \ref{fig:RootUmbrellaStandardEmbedding} and \ref{fig:DrawingToPathSystem}.

\begin{figure}[ht!]
\centering
\includegraphics[width=100mm]{RootUmbrellaStandardEmbeddingFigure.pdf}
\caption{The standard embedding of the root-umbrella $U_0$ and one of its hanging subgraphs $S$. The dotted path represents the dividing path $\Pi_1$ while the dashed path represents $\Pi_2$. \label{fig:RootUmbrellaStandardEmbedding}}
\end{figure}
\fi

We now must argue that all hanging subgraphs of $U_0$ are drawn with height
at most $H-1$ and have escape paths from their anchor-edges; we can then find 
umbrella systems for them by induction and combining them with $U_0$ gives
the umbrella system for $G$ as desired.  

To prove the height-bound, we define ``dividing paths''
as follows.   The outer-face of $U_0$ in the standard embedding contains
$(\ell_1,\ell_2)$ (since it is not a cutting edge) as well as $v$.  Let
$P_1$ and $P_2$ be the two paths from $\ell_1$ and $\ell_2$ to $v$ along
this outer-face in the standard embedding.  
Define the {\em dividing path} $\Pi_i$ (for $i=1,2$) to be the
poly-line in $\Gamma$ that consist
of the left escape path from $\ell_i$, then the drawing of the path $P_i$
(i.e., the vertical segments of its edges and parts of the horizontal
segments of its vertices),
and then the right escape path from $v$.  
See Figure \ref{fig:DrawingToPathSystem}.  
%Note that $\Pi_i$ need not
%run along the outer-face of $\Gamma$, since the drawing is not required
%to be in the standard embedding.

\begin{figure}[ht!]
\hspace*{\fill}
\includegraphics[scale=0.5,page=2]{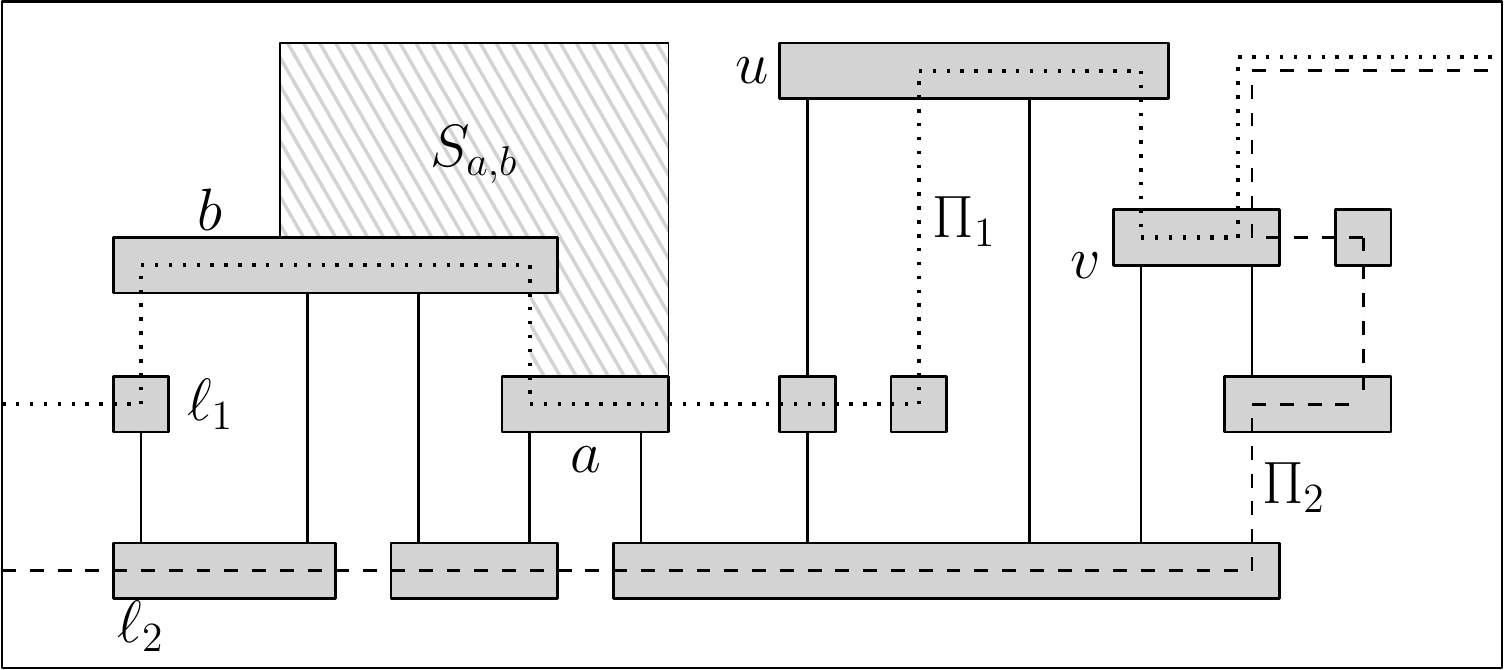}
\hspace*{\fill}
\includegraphics[scale=0.5,page=1]{DrawingToPathSystemFigure.pdf}
\hspace*{\fill}
\caption{Extracting dividing paths from a flat visibility representation. $P_1$/$\Pi_1$ is dotted while $P_2/\Pi_2$ is dashed. \label{fig:DrawingToPathSystem}}
\end{figure}

Now consider any
hanging subgraph $S_{a,b}$ of $U_0$ with anchor-edge $(a,b)$. 
No edge incident to $v$ is an anchor-edge, and neither is $(\ell_1,\ell_2)$,
since it is not a cutting edge.  So $(a,b)$ is an edge of $P_1$ or $P_2$ (say
$P_1$) that is not incident to $v$.  Therefore $(a,b)$ (and with it $S_{a,b}$)
is vertex-disjoint from $P_2$.
In consequence, the drawing $\Gamma_S$ of $S_{a,b}$ induced by $\Gamma$ is
disjoint from the dividing path $\Pi_2$.  Since $\Pi_2$ connects a point
on the left boundary with a point on the right boundary, therefore
$\Gamma_S$ must be entirely above or entirely below $\Pi_2$, 
say it is above.  
Since
$\Pi_2$ has all bends at points with integral $y$-coordinate,
therefore the bottom layer of $\Gamma$ is not available for $\Gamma_S$.
In consequence $\Gamma_S$ has height at most $H-1$ as desired.

Recall that $(a,b)$ belongs to $P_1$ and is not incident to $v$.
After possible renaming of $a$ and $b$, we may assume that $b$ is closer to
$\ell_1$ along $P_1$ than $a$.  
Then the sub-path of $P_1$ from $b$ to $\ell_1$ is
interior-disjoint from $S_{a,b}$.  This path, together with the left escape path
from $\ell_1$, is a left escape path from $b$ that resides within the top
$H-1$ layers, because it does not contain $v$ and hence is disjoint from $\Pi_2$.
We can hence apply induction to $S_{a,b}$ to obtain an umbrella system of
depth at most $H-2$ with root-edge $(a,b)$.  Repeating this for all 
hanging subgraphs, and combining the resulting umbrella systems with
$U_0$, gives the result.
\end{proof}

%The following theorem summarizes the lower bound argument.

\begin{theorem} \label{thm:LowerBound}
Let $G$ be a maximal outerplanar graph.  If $G$
has a flat visibility representation $\Gamma$ of height $H$, 
then $ud^{\mathit{free}}(G)\leq H-1$.
\end{theorem}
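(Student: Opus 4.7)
The theorem is essentially a direct corollary of Lemma \ref{lem:DrawingToPathSystem}, together with the preparatory step in Lemma \ref{lem:LeftmostNonCuttingEdgeWithRightmost}. My plan is to show that for \emph{any} flat visibility representation $\Gamma$ of height $H$ we can, without increasing the height, find a non-cutting edge one of whose endpoints has an escape path; then Lemma \ref{lem:DrawingToPathSystem} immediately yields an umbrella system of depth at most $H-1$ rooted at that edge, which in turn bounds the free umbrella depth by $H-1$.

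First, observe that because $G$ is maximal outerplanar with $n\geq 3$, every vertex has degree at least $2$. Hence, by the argument already given in the text before Lemma \ref{lem:LeftmostNonCuttingEdgeWithRightmost}, the drawing $\Gamma$ contains a left-free edge $(\ell_1,\ell_2)$ (namely the leftmost vertical edge). In particular $\ell_1$ has a left escape path, obtained simply by moving leftward in its own layer to the left boundary of $B_\Gamma$.

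Now apply Lemma \ref{lem:LeftmostNonCuttingEdgeWithRightmost}(2) with this left-free edge $(\ell_1,\ell_2)$ and with $w := \ell_1$. This produces a drawing $\Gamma'$ of $G$ of the same height $H$ in which there is a right-free edge $(r_1,r_2)$ that is \emph{not} a cutting edge of $G$. Since $(r_1,r_2)$ is right-free, its endpoint $r_1$ has a right escape path (go rightward in $r_1$'s layer). Thus $\Gamma'$ satisfies the hypotheses of Lemma \ref{lem:DrawingToPathSystem} with the non-cutting edge $(u,v):=(r_1,r_2)$: a flat visibility representation of height $H$ together with a non-cutting edge one of whose endpoints has an escape path.

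By Lemma \ref{lem:DrawingToPathSystem}, $G$ therefore has an umbrella system with root-edge $(r_1,r_2)$ and depth at most $H-1$, i.e., $ud(G;r_1,r_2)\leq H-1$. Taking the minimum over all possible root-edges gives $ud^{\mathit{free}}(G)\leq ud(G;r_1,r_2)\leq H-1$, as claimed. I do not anticipate any real obstacle here; the only thing to be careful about is to invoke the correct orientation of Lemma \ref{lem:LeftmostNonCuttingEdgeWithRightmost} (the version starting from a left-free edge and producing a non-cutting right-free edge) so that the non-cutting edge delivered by the lemma is precisely the one whose endpoint inherits an escape path.
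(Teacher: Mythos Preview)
Your proof is correct and follows essentially the same approach as the paper: use Lemma~\ref{lem:LeftmostNonCuttingEdgeWithRightmost} to obtain a drawing of the same height with a non-cutting right-free edge, then apply Lemma~\ref{lem:DrawingToPathSystem} to that edge. You are simply more explicit than the paper in setting up the hypotheses of Lemma~\ref{lem:LeftmostNonCuttingEdgeWithRightmost} (by first exhibiting the leftmost vertical edge as a left-free edge and taking $w=\ell_1$), which is fine.
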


\begin{proof}
Using Lemma \ref{lem:LeftmostNonCuttingEdgeWithRightmost}, we can convert $\Gamma$ into a drawing $\Gamma'$ of the same height in which some edge $(u,v)$ is a right-free non-cutting edge. This implies that there is a right escape path from $v$, and by Lemma~\ref{lem:DrawingToPathSystem} we can find an umbrella system of $G$ with root-edge $(u,v)$ and depth $H-1$.  So $ud^{\mathit{free}}\leq ud(G;u,r)\leq H-1$.
\end{proof}

%======================================================================
\section{Conclusions and Future Work}\label{chap:Conclusions}
%======================================================================

We presented an algorithm for drawing maximal outerplanar graphs that is a 2-approximation for the optimal height. To this end, we introduced the umbrella depth as a new graph parameter for maximal outerplanar graphs, and used as key result that any drawing of height $H$ implies an umbrella-depth of at least $H-1$.
Our result significantly improves the previous best result, which was based on 
the pathwidth and gave a 4-approximation.
We close with some open problems:
\begin{itemize}
\item Our result only holds for maximal outerplanar graphs. Can the algorithm be modified so that it works for all outerplanar graphs?  Specifically, can we make an outerplanar graph maximal in such a way that the umbrella depth does not increase (much)?
\item The algorithm from Section~\ref{chap:UpperBound} creates a drawing that does not place all vertices on the outer face. Can we create an algorithm that minimizes or approximates the optimal height when the standard planar embedding must be respected?
\item What is the width achieved by the algorithm from Section~\ref{chap:UpperBound}? Any visibility representation can be modified without changing the height so that the width is at most $m+n$, where $m$ is the number of edges and $n$ is the number of vertices \cite{Bie-GD14}. Thus the width is $O(n)$, but what is the constant?
\item Is it possible to determine the optimal height for maximal outerplanar graphs in polynomial time?
%, or can we at least achieve a smaller approximation factor? Since the pathwidth of outerplanar graphs can be approximated efficiently \cite{bodlaender2002approximation}, the algorithm to find the optimal height in \cite{DFK+08} becomes faster. Does this give us a pseudo-polynomial algorithm for the optimal height?
%\item The algorithm in Section~\ref{chap:UpperBound} can be generalized to any system of subgraphs in which each subgraph can be drawn on three layers with $(u,v)$ spanning the top layer. This includes umbrellas, but can be generalized to special umbrellas where the ``handle'' is an outerplanar path for which the cap $(u,v)$ is not necessarily an end-edge. Perhaps this could even work for a maximal outerplanar graph whose dual is a tripod (i.e., a subdivision of $K_{1,3}$)? Could this be used to find better approximation factors? The bottleneck here is proving better lower bounds.
\end{itemize}
Finally, are there approximation algorithms for the height or the area of drawings for other, more general planar graph classes? 
%Of particular interest are planar 3-trees, which are graphs that, like maximal outerplanar graphs, naturally feature a tree-like description.

% The \appendix statement indicates the beginning of the appendices.
%\appendix

%%%%%%%%%%%%%%%%%%%%%%%%%%%%%%%%%%%%%%%%%%%%%%%%%%%%%%%%%%%%%%%%%%%%%%%%
\bibliographystyle{plain}
\bibliography{journal,full,gd,papers}

%%%%%%%%%%%%%%%%%%%%%%%%%%%%%%%%%%%%%%%%%%%%%%%%%%%%%%%%%%%%%%%%%%%%%%%%
\newpage
\begin{appendix}
%======================================================================
\section{Computing the Depth}\label{chap:Algorithm}
%======================================================================

We now introduce a dynamic programming algorithm for finding the rooted 
bonnet depth of a maximal outerplanar graph $G$, relative to a given
fixed root-edge $(u,v)$.  (A very similar algorithm finds the umbrella
depth; we leave those details to the reader.)

As before, for any cutting edge $(a,b)$ of $G$ let $S_{a,b}$
is the cut-component of $G$ that does not contain $(u,v)$.
It will also be convenient to define $S_{a,b}:=\{(a,b)\}$ if $(a,b)$ is
a non-cutting edge with $(a,b)\neq (u,v)$, and to define $S_{u,v}:=G$.
We also use the notation $ud(a,b):=ud(S_{a,b};a,b)$ and define $ud(a,b)=0$
if $S_{a,b}$ is a single edge.

We first sketch the overall idea.  Consider one subgraph $S_{a,b}$ and
the root bonnet $U_0$ used in the bonnet system of minimum depth.
The ribbon of $U_0$ is an outerplanar path connecting
edges $e_1$ and $e_2$, but we can view it as two outerplanar
paths, connecting the root-edge $(a,b)$ to one of $e_1$ and $e_2$.  If $c$
is the common neighbour of $a$ and $b$, then we can hence split the bonnet 
into two parts by removing the face $\{a,b,c\}$.  Each part looks much like
an umbrella, except that they are rooted at $(a,c)$ and $(c,b)$, respectively,
and each of them has no fan at $c$.  
See also Figure~\ref{fig:BonnetDepthFormula}(a).
Therefore the minimum bonnet
depth can be found by finding the minimum-depth in the two 
subgraphs $S_{a,c}$ and $S_{b,c}$, under the restriction that the
root bonnets in the corresponding systems are such partial umbrellas.
We must repeat the argument for the partial umbrellas, breaking them 
into a fan and a handle, and hence end up computing 6 different types
of depths for each anchor edge, but 6 types are enough and hence the
overall run-time is linear.
We now define these variants of bonnet-depth for $S_{a,b}$:
% that we require for a graph $S_{a,b}$ with root-edge $(a,b)$.
\begin{itemize}
\item A \emph{handle} is an outerplanar path that connects $(a,b)$ to some
non-cutting edge.
See Figure~\ref{fig:BonnetDepthFormula}(b).
Define $$bd^{h}(a,b) := 1+ \min_{U^h} \Big\{ \max_{(x,y)} \; bd(x,y) \Big\}$$
where $U^h$ is a handle and $(x,y)$ is a non-cutting edge of $U^h$. 
\item The \emph{full fan at $a$} [resp.~\emph{full fan at $b$}] consists of $a$ [$b$] and all neighbours of $a$ [$b$].  
See Figure~\ref{fig:BonnetDepthFormula}(c).
Define $$bd^{f_a}(a,b) := 1+ \max_{(x,y)} \; bd(x,y)$$
where $(x,y)$ is a non-cutting edge of the full fan at $a$.
Symmetrically define $bd^{f_b}(a,b)$ using the full fan at $b$.
\item A \emph{partial-$a$ umbrella} [resp.~\emph{partial-$b$ umbrella}] is an umbrella with cap $(a,b)$ that contains all neighbours of $a$ [$b$] and in which the fan at $b$ [$a$] is empty.
See Figure~\ref{fig:BonnetDepthFormula}(d).
Define $$bd^{p_a}(a,b) := 1+ \min_{U^{p_a}} \Big\{ \max_{(x,y)} \; bd(x,y) \Big\}$$
where $U^{p_a}$ is a partial-$a$ umbrella and $(x,y)$ is a non-cutting edge of $U^{p_a}$.
Symmetrically define $bd^{h_b}(a,b)$ using a partial-$b$ umbrella.
\end{itemize}

\begin{figure}[ht!]
\hspace*{\fill}
\begin{subfigure}[b]{0.3\linewidth}
	\includegraphics[scale=0.7,page=4]{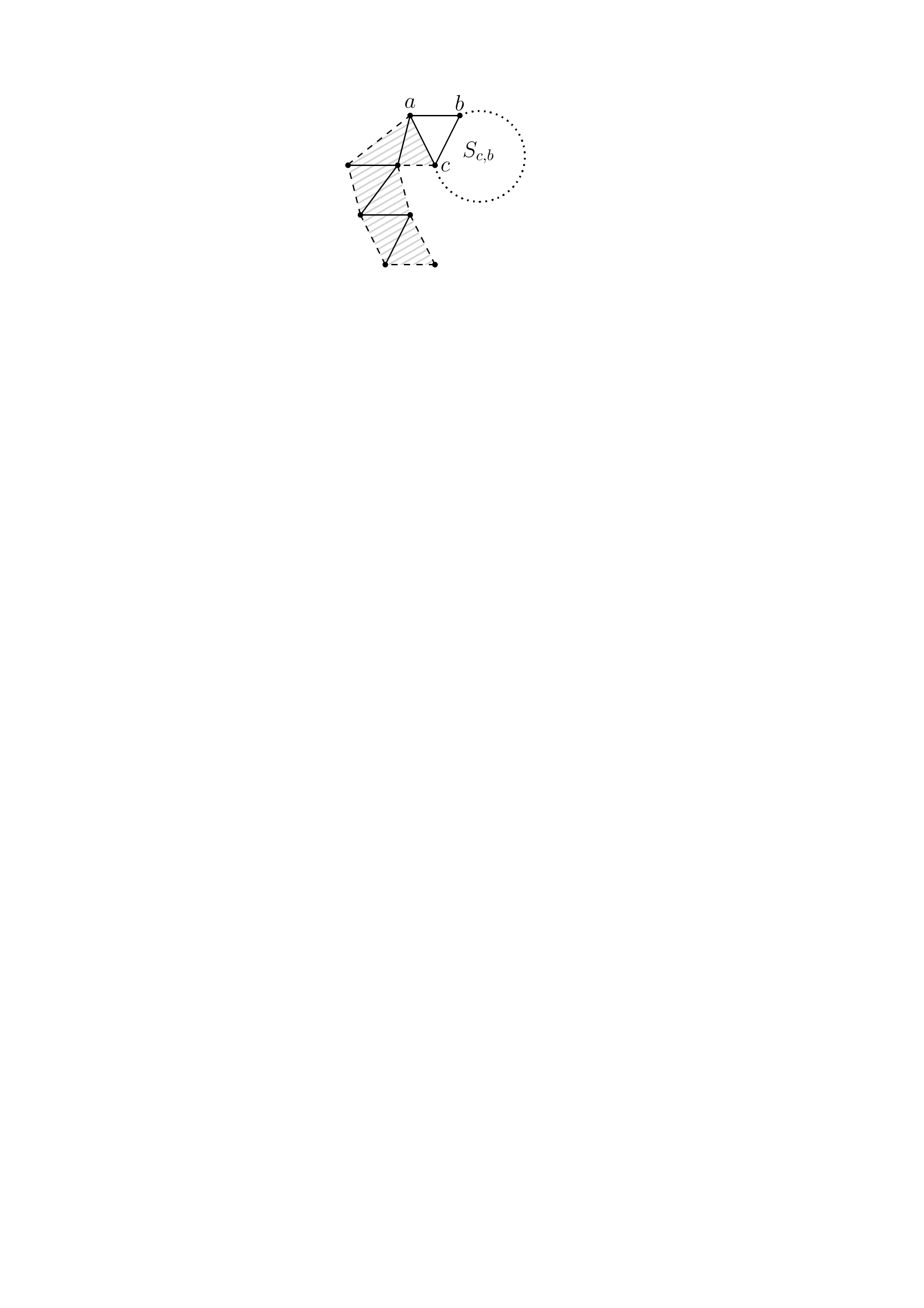}
	\caption{Bonnet.}
\end{subfigure}
\hspace*{\fill}
\begin{subfigure}[b]{0.6\linewidth}
	\includegraphics[scale=0.7,page=1,trim=0 0 20 0,clip]{SplitBonnets.pdf}
	\includegraphics[scale=0.7,page=6,trim=50 0 0 0,clip]{SplitBonnets.pdf}
	\caption{Handle.}
\end{subfigure}
\hspace*{\fill}
\newline
\hspace*{\fill}
\hspace*{\fill}
\begin{subfigure}[b]{0.15\linewidth}
	\includegraphics[scale=0.7,page=3,trim=50 40 20 0,clip]{SplitBonnets.pdf}
	\caption{Full fan.}
\end{subfigure}
\hspace*{\fill}
\begin{subfigure}[b]{0.6\linewidth}
	\hspace*{\fill}
	\includegraphics[scale=0.7,page=2,trim=0 0 20 0,clip]{SplitBonnets.pdf}
	\includegraphics[scale=0.6,page=5,trim=50 0 0 0,clip]{SplitBonnets.pdf}
	\hspace*{\fill}
	\caption{Partial umbrella.}
\end{subfigure}
\hspace*{\fill}
\caption{For the recursive formulas.  Edges that may be anchor-edges are dashed.}
\label{fig:BonnetDepthFormula}.
\end{figure}

\begin{lemma}
\label{lem:BonnetDepthFormula}
We have $ud^t(a,b)=0$ if $S_{a,b}$ is a single edge, where $t\in \{\emptyset,
h,f_a,f_b,p_a,p_b\}$.  Else, letting $c$ be the common neighbour of $a$ and $b$,
we have
\begin{enumerate}
\item $bd(a,b) = \max\{ bd^{p_a}(a,c), bd^{p_b}(c,b) \}$
\item $bd^{h}(a,b) = \min\left[ \max\{ bd^{h}(a,c), 1+bd(c,b) \},
			\max\{ 1+bd(a,c), bd^h(c,b) \} \right]$
\item $bd^{f_a}(a,b) = \max\{ bd^{f_a}(a,c), 1+bd(c,b) \}$
\item $bd^{f_b}(a,b) = \max\{ 1+bd(a,c), bd^{f_b}(c,b) \}$
\item $bd^{p_a}(a,b) = \min\left[ \max\{ bd^{p_a}(a,c), 1+bd(c,b) \},
			\max\{ bd^{f_a}(a,c), bd^h(c,b) \} \right]$
\item $bd^{p_b}(a,b) = \min\left[ \max\{ 1+bd(a,c), bd^{p_b}(c,b) \},
			\max\{ bd^{h}(a,c), bd^{f_b}(c,b) \} \right]$
\end{enumerate}
\end{lemma}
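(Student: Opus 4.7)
I would prove all six identities simultaneously by strong induction on $|V(S_{a,b})|$. The base case, where $S_{a,b}$ is a single edge, is exactly the stated convention. For the inductive step, let $c$ be the common neighbour of $a$ and $b$ in $S_{a,b}$, so that the triangle $\{a,b,c\}$ is the unique interior face of $S_{a,b}$ incident to $(a,b)$; removing it splits $S_{a,b}$ into the strictly smaller subgraphs $S_{a,c}$ and $S_{c,b}$, meeting only at $c$. Every argument follows the same recipe: analyse how the chosen root structure on $S_{a,b}$ passes through $\{a,b,c\}$, restrict it to $S_{a,c}$ and $S_{c,b}$, identify the ``flavour'' of structure induced on each side, and conversely check that optimal child structures of compatible flavours can be glued through the triangle.

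The formulas for bonnets and full fans have essentially rigid root pieces. For~(1), the ribbon of any bonnet with cap $(a,b)$ contains $c$ and must exit $\{a,b,c\}$ through both $(a,c)$ and $(c,b)$ to reach non-cutting edges. Restricting the bonnet to $S_{a,c}$ therefore yields a partial-$a$ umbrella of $S_{a,c}$ rooted at $(a,c)$ --- the full fan at $a$ is preserved, the ribbon-portion becomes a handle from $(a,c)$, and there is no fan at $c$ --- and symmetrically on the $b$-side; anchor-edges split disjointly, giving $bd(a,b)=\max\{bd^{p_a}(a,c),bd^{p_b}(c,b)\}$, with the converse by gluing. For~(3), the full fan at $a$ in $S_{a,b}$ is uniquely the triangle $\{a,b,c\}$ glued to the full fan at $a$ in $S_{a,c}$ along $(a,c)$; its anchors are $(c,b)$ together with those of the smaller fan, and~(4) is symmetric.

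The remaining three formulas involve a choice: the handle contains $\{a,b,c\}$ as an endpoint face of its dual path, so it extends either through $(a,c)$ into $S_{a,c}$ or through $(c,b)$ into $S_{c,b}$, and the outer $\min$ captures this binary choice. In~(2), extending into $S_{a,c}$ restricts to a handle of $S_{a,c}$ rooted at $(a,c)$ while $(c,b)$ remains an anchor, giving $\max\{bd^{h}(a,c),\,1+bd(c,b)\}$; the other direction is symmetric. Formula~(5) is analogous but carries an extra full fan at $a$ on the $a$-side: if the handle extends into $S_{a,c}$, the $a$-side becomes a partial-$a$ umbrella of $S_{a,c}$ (contributing $bd^{p_a}(a,c)$) with $(c,b)$ still an anchor (contributing $1+bd(c,b)$); if instead the handle extends into $S_{c,b}$, the $a$-side collapses to just the full fan at $a$ of $S_{a,c}$ (contributing $bd^{f_a}(a,c)$) and the $b$-side is a handle of $S_{c,b}$ rooted at $(c,b)$ (contributing $bd^{h}(c,b)$). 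Formula~(6) is symmetric.

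\emph{Main obstacle.} The proof is not conceptually deep but is combinatorially delicate. The key thing to verify is that the forward restriction and the reverse gluing across $\{a,b,c\}$ preserve the exact flavour of structure on each side: for instance, that the restriction of a partial-$a$ umbrella of $S_{a,b}$ really has an empty fan at $c$ (inherited from the absence of a fan at $c$ in the parent), and that the concatenation of two child outerplanar paths meeting at the triangle forms a single outerplanar path with precisely the right shared edges at $a$ and $b$. One must also check that $(a,c)$ and $(c,b)$ cease to be anchors exactly when they are absorbed into a handle on the respective side, so that no anchor is double-counted or lost across the split. Once this bookkeeping is settled, every recursion collapses mechanically to ``$1 + \max_{\mathrm{anchors}}bd = \max(\text{$a$-side contribution},\,\text{$b$-side contribution})$'', suitably minimised over the discrete structural choices available at the triangle.
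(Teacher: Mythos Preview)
Your approach is correct and essentially identical to the paper's: split the root structure $U_0$ at the face $\{a,b,c\}$, identify the induced flavour on each of $S_{a,c}$ and $S_{c,b}$ (with the binary choice of handle direction producing the outer $\min$ in (2), (5), (6)), and glue optimal child systems for the converse. The paper spells out only case~(5) in detail and leaves the rest to the reader, exactly as you sketch; your one minor imprecision---that in~(1) the ribbon ``must exit through both $(a,c)$ and $(c,b)$''---is harmless, since when one of these is a non-cutting edge of $S_{a,b}$ the corresponding child is a single edge with depth~$0$ and the formula degenerates correctly.
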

\begin{proof}
All these formulas are proved in a similar way:  Consider the root bonnet $U_0$
of the bonnet system that achieves the depth on the left side.  
When splitting this bonnet into two by removing the face 
$\{a,b,c\}$, we obtain two bonnets for the two subgraphs $S_{a,c}$ and
$S_{c,b}$ and can argue what type they are.  (Sometimes there may be two
possibilities, depending on which direction the handle of $U_0$ went, in case
of which the one that yields the smaller depth is used.)  This proves 
``$\geq$'', and equality is easily shown by putting together bonnet systems
of $S_{a,c}$ and $S_{c,b}$ of the appropriate type.

We demonstrate this in detail for (5); see also Figure~\ref{fig:BonnetDepthFormula}(d).
So assume we have an umbrella system $\mathcal{U}$ with root-edge 
$(a,b)$ that has depth $bd^{p_a}(a,b)$ and where the root-bonnet $U_0$ is
a partial-$a$ umbrella.  So $U_0$ consists of a handle $P$ and a fan at $a$
that includes all neighbours of $a$ not in $P$.  Since $P$ is a handle,
it connects to $(a,b)$, therefore
not both edges $(a,c)$ and $(c,b)$ can be cutting edges of $P$.  

We distinguish cases.  In the first
case, $(c,b)$ is not a cutting edge of $P$.  
Since the fan at $b$ is empty in $U_0$, $(c,b)$ is not a
cutting edge of $U_0$ either.
%Thus $(c,b)$ is an anchor-edge of $U_0$ and 
Thus the formula for $bd^{p_a}(a,b)$ includes the
term $1+ud(c,b)$ in the maximum.    
%(We slightly abuse the notation of 
%``anchor-edge'' here:  It is possible that $S_{c,b}$ is a single edge
%$(c,b)$, and hence $(c,b)$ is not an anchor-edge and the term is not included.  
%However, since $ud(c,b)=0$ in this case, it makes no difference whether we 
%include the term or not.)   
Furthermore, $U_0-\{b\}$ is a partial-$a$ umbrella for
$S_{a,c}$, and this, together with the subsystems of $\mathcal{U}$ for
its hanging subgraphs, gives
a bonnet system with depth at least $bd^{p_a}(a,c)$.  Therefore
$bd^{p_a}(a,b) \geq \max\{bd^{p_a}(a,c),1+ud(c,b)\}$ in the first case.

In the second case, $(a,c)$ is not a cutting edge of $P$.  Therefore the
fan of $U_0$ at $a$ consists of $a$ and all neighbours of $a$ except $b$,
and thus is the full fan $F$ at $a$ in the subgraph $S_{a,c}$.  The subsystems
of $\mathcal{U}$ for hanging subgraphs of $F$
give a bonnet system for $S_{a,c}$
where the root bonnet is the full fan at $a$, hence it has depth at least
$bd^{f_a}(a,c)$.  Furthermore, $P-\{a\}$ is a handle for $S_{c,b}$.
%(or possibly it is just one edge, but then $S_{c,b}$ must be empty or
%we could have used a longer handle).  
So $\mathcal{U}$ implies
a bonnet system for $S_{c,b}$ where the root is a handle, hence it has
depth at least $bd^{h}(c,b)$.
Therefore $bd^{p_a}(a,b) \geq \max\{bd^{f_a}(a,c),ud^h(c,b)\}$ in the
second case.  One of the two cases must apply, and so $bd^{p_a}(a,b)$
is at least as big as the smaller of the two bounds and ``$\geq$'' holds.

To show ``$\leq$'', let us assume that the minimum is achieved at 
$\max\{bd^{f_a}(a,c),ud^h(c,b)\}$ (the other case is similar).
Find a bonnet system $\mathcal{U}^f$ of $S_{a,c}$ of depth $bd^{f_a}(a,c)$
where the root
bonnet $F$ is the full fan at $a$, and a bonnet system $\mathcal{U}^h$
of $S_{c,b}$ of depth $ud^h(c,b)$
where the root bonnet $U^h$ is a handle.   Then
$F\cup U^h \cup \{(a,b)\}$ is a partial-$a$ umbrella $U_0$,
and combining it with the subsystems of $\mathcal{U}^f$ and $\mathcal{U}^h$
gives a bonnet system of $S_{a,c}$ whose depth is
$\max\{bd^{f_a}(a,c),ud^h(c,b)\}$.
\end{proof}

We can convert these formulas into a 
dynamic programming algorithm to compute the bonnet depth by using 
the standard 
bottom-up traversal in a tree. Given a maximal outerplanar graph $G$,
initialize $ud^t(a,b)=0$ for all types $t$ and for all 
non-cutting-edges $(a,b)$ with the exception of $(u,v)$.
Root the dual tree $T$ at the face incident to root-edge $(u,v)$. 
Any node of $T$ is associated with a cutting-edge of $G$ by taking the 
dual of the arc that connects the node with its parent in $T$.
Traversing $T$ bottom up, when we encounter a node $f$ of $T$ (hence a face
of $G$) we have obtained the bonnet depth values for two out of the
three edges incident to $f$ already, and can compute the bonnet depth
values for the third using the above formulas.  This takes $O(1)$ time
since there are $6$ values and each formula can be evaluated in constant
time.  Finally we evaluate at the root of $T$, which gives 
$bd(G;u,v)$.  Since $T$ has $n-3$ nodes, the
total run-time is $O(n)$.

\begin{theorem} \label{thm:FindingUmbrellaDepth}
Given a non-cutting edge $(u,v)$, there exists an $O(n)$ algorithm to find the rooted umbrella depth $ud(G;u,v)$ of a maximal outerplanar graph $G$ with $n$ vertices. 
\end{theorem}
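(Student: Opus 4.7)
The plan is to adapt the dynamic programming of Lemma~\ref{lem:BonnetDepthFormula} from bonnet depth to umbrella depth; since the intermediate structures (handles, full fans, partial umbrellas) that already appear inside the bonnet recurrences are themselves umbrella-flavoured, most of the work carries over unchanged. First I would re-introduce, for each cutting edge $(a,b)$ of $G$ with respect to a fixed root-edge $(u,v)$, the six quantities $ud(a,b)$, $ud^h(a,b)$, $ud^{f_a}(a,b)$, $ud^{f_b}(a,b)$, $ud^{p_a}(a,b)$, $ud^{p_b}(a,b)$, defined exactly as their $bd$-counterparts but as minimum depths over \emph{umbrella} systems of the corresponding root type (handle, full fan, or partial umbrella). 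The base case remains $ud^t(a,b)=0$ when $S_{a,b}$ is a single edge.

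Next I would prove an analogue of Lemma~\ref{lem:BonnetDepthFormula}. Formulas (2)--(6) carry over verbatim with every $bd$ replaced by $ud$: the case analysis that splits the root structure at the face $\{a,b,c\}$ into its two pieces in $S_{a,c}$ and $S_{c,b}$ is identical, because in each of those five cases the root is already an umbrella-like object whose split pieces fit the same type labels. Only formula~(1) changes structurally: a bonnet's ribbon enters \emph{both} $S_{a,c}$ and $S_{c,b}$, which forces the maximum of two partial-umbrella depths, whereas an umbrella's single handle enters exactly one of the two sides. Consequently the umbrella recurrence becomes
\begin{equation*}
ud(a,b) \;=\; \min\Big[\,\max\{ud^{p_a}(a,c),\, ud^{f_b}(c,b)\},\;\; \max\{ud^{f_a}(a,c),\, ud^{p_b}(c,b)\}\,\Big],
\end{equation*}
where the two options of the outer minimum correspond to the handle leaving face $\{a,b,c\}$ through $(a,c)$ or through $(c,b)$: on the chosen side we obtain a partial umbrella anchored at the common neighbour $c$, and on the opposite side only a full fan at $a$ or $b$ survives. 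The ``$\geq$'' and ``$\leq$'' directions are argued exactly as in the proof of formula~(5) of Lemma~\ref{lem:BonnetDepthFormula}, using Observation~\ref{obs:HangingSubgraphLocations} to guarantee that the edge through which the handle continues is not an anchor-edge, so the depths on that side really are of partial-umbrella type.

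Once the six recurrences are in place, the algorithm is the same bottom-up traversal of the dual tree $T$ rooted at the face incident to $(u,v)$ that is described after Lemma~\ref{lem:BonnetDepthFormula}. When processing a face $f=\{a,b,c\}$ whose parent arc in $T$ corresponds to edge $(a,b)$, the six values on $(a,c)$ and $(c,b)$ have already been computed, and the recurrences yield the six values on $(a,b)$ in constant time. Finally $ud(G;u,v)$ is read off at the root. Since $T$ has $O(n)$ nodes and each contributes $O(1)$ work, the total running time is $O(n)$.

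The main obstacle is verifying formula~(1), which is the only place where the umbrella recurrence differs non-trivially from the bonnet recurrence. Care is needed to confirm that when the handle leaves through one edge of the face $\{a,b,c\}$, the piece of the root umbrella remaining on the opposite side is exactly the full fan at $a$ or $b$ (and not something larger that would require a partial-umbrella term instead). Once this case split is pinned down, the remaining five recurrences and the linear-time dynamic programming are a direct transcription of the bonnet argument.
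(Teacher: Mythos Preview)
Your proposal is correct and follows exactly the approach the paper intends: the paper explicitly develops the dynamic program for bonnet depth and remarks that ``a very similar algorithm finds the umbrella depth; we leave those details to the reader,'' so the theorem is stated without a separate proof. You have correctly filled in the one non-trivial detail the paper omits, namely that formula~(1) changes from a single $\max$ of two partial-umbrella terms (bonnet case) to a $\min$ over two options, each pairing a partial-umbrella term on the side the handle enters with a full-fan term on the opposite side; formulas~(2)--(6) indeed carry over verbatim with $bd$ replaced by $ud$, and the bottom-up evaluation over the dual tree is unchanged.
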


%----------------------------------------------------------------------
\subsection{Free vs. Rooted Umbrella/Bonnet Depth}
%----------------------------------------------------------------------

Note that our algorithm computes the {\em rooted} bonnet depth for $G$, since the root-edge $(u,v)$ must be given. One way to instead find the free bonnet depth is to repeat the process described above for every choice of root-edge in $G$. This would give an $O(n^2)$ algorithm for finding the free bonnet depth. One could likely compute the free bonnet depth in $O(n)$ time by initializing $ud^t(a,b)=0$ for {\em all} non-cutting edges, then updating at the face where the resulting bonnet depth is minimized, and using as root-edge one near where we update last. However, 
as we will show now, the free bonnet depth is at most one less than the rooted bonnet depth, and therefore it does not seem worth the minor improvement to work out the details of this approach.

\begin{lemma} \label{lem:FreeVsRootedUmbrellaDepth}
Given a maximal outerplanar graph $G$, we have 
%the following relationship between the free umbrella depth of $G$ and the rooted umbrella depth of $G$:
$$ bd^{\mathit{free}}(G) = \min_{(u,v)} \left\{ bd(G,u,v) \right\} \le \max_{(u,v)} \left\{ bd(G,u,v) \right\} \le bd^{\mathit{free}}(G)+1 $$
where the minimum and maximum are taken over all non-cutting edges $(u,v)$ of $G$.
\end{lemma}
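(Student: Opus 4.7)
The first equality is the definition of $bd^{\mathit{free}}(G)$, and the middle inequality $\min\le\max$ is immediate. The substantive content is the inequality $bd(G;u,v)\le bd^{\mathit{free}}(G)+1$ for every non-cutting edge $(u,v)$, which I plan to prove by a re-rooting argument.

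Let $d := bd^{\mathit{free}}(G)$ and fix an optimal bonnet system $\mathcal{U}^*$ rooted at some edge $(u^*,v^*)$ of depth $d$, with bonnet-decomposition tree $\mathcal{T}^*$. Given a target non-cutting edge $(u,v)$, I would identify the unique bonnet $U_k^*$ of $\mathcal{U}^*$ containing $(u,v)$, consider the path $U_0^*,U_1^*,\dots,U_k^*$ from the root of $\mathcal{T}^*$ down to $U_k^*$, and build a new root bonnet $U_0'$ with cap $(u,v)$ whose vertex set contains (a superset of) $U_0^*\cup U_1^*\cup\cdots\cup U_k^*$, together with all neighbours of $u$ and $v$. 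Every hanging subgraph $S$ of $U_0'$ is then contained in some hanging subgraph of one of the $U_i^*$, and therefore inherits from $\mathcal{U}^*$ a bonnet system rooted at its anchor-edge of depth at most $d-1$ (because the root bonnet $U_0^*$ has been excluded from what remains). Combining these subsystems with $U_0'$ as root yields the new bonnet system $\mathcal{U}'$ rooted at $(u,v)$.

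The main obstacle is that the union $U_0^*\cup U_1^*\cup\cdots\cup U_k^*$ is generally not itself a valid bonnet, since a bonnet has the rigid structure of a single ribbon connecting two non-cutting edges together with two fans at the ends of the cap, whereas the union may have a branching tree-like shape in the dual. Forming a valid $U_0'$ that covers this union may require absorbing additional triangles or, when $(u,v)$ lies deep inside some $U_i^*$, performing surgery on an inherited hanging subsystem so as to re-root it at a different anchor-edge. I expect this surgery to cost at most one additional level of depth, which together with Observation~\ref{obs:HangingSubgraphLocations} (ensuring that no anchor-edge is created incident to $u$ or $v$) yields the desired bound $bd(G;u,v)\le d+1$. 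The delicate step is a case analysis, most naturally carried out using the recursive formulas of Lemma~\ref{lem:BonnetDepthFormula}, verifying that this depth increment is at most one.
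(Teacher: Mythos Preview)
Your plan has a real gap exactly where you flag it: the union $U_0^*\cup\cdots\cup U_k^*$ is in general not a bonnet (its dual can branch arbitrarily), and there is no reason to believe that ``absorbing additional triangles'' or ``surgery'' repairs this at a cost of only one level.  A bonnet is a ribbon (outerplanar path) plus two fans; a chain of bonnets glued along anchor-edges can have a dual tree that is a caterpillar of caterpillars, far from a single path-plus-two-fans shape.  Trying to cover it by one bonnet would typically force you to push large pieces back into hanging subgraphs, and those pieces need not have depth $\le d-1$ in $\mathcal{U}^*$.  The case analysis via Lemma~\ref{lem:BonnetDepthFormula} that you allude to does not obviously terminate with a uniform $+1$ bound.

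The paper avoids this difficulty by a much simpler choice of root bonnet.  Instead of trying to swallow a chain of old bonnets, it takes as the new root an \emph{umbrella} $U$ with cap $(u,v)$ whose handle is just the outerplanar path $P$ connecting $(u,v)$ to $(u^*,v^*)$, together with the full fans at $u$ and $v$.  This is trivially a valid bonnet.  The price is that one no longer aims for depth $\le d-1$ on the hanging subgraphs; instead one shows $bd(S_{a,b};a,b)\le d$ for each hanging subgraph $S_{a,b}$ of $U$.  The point is that $(u^*,v^*)\in P\subseteq U$, so the cutting edge $(a,b)$ separates $(u^*,v^*)$ from $S_{a,b}$, and hence the bonnets of $\mathcal{U}^*$, trimmed to $S_{a,b}$ and re-expanded to include all neighbours of each cap, yield a bonnet system of $S_{a,b}$ rooted at $(a,b)$ of depth at most $d(\mathcal{U}^*)=d$.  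Adding $U$ on top gives depth $d+1$.

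In short, the insight you are missing is that you should not try to save a level on the hanging subgraphs; accepting $d$ there lets you pick an obviously valid root and removes the need for any case analysis.
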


\begin{proof}
The first equality holds per definition, and the second inequality is obvious, so we focus on the third inequality.
Let $\mathcal{U}^*$ be a bonnet system on $G$ with depth $H := bd^{\mathit{free}}(G)$ and let $(u^*,v^*)$ be its root-edge, which by definition is not a cutting edge.   Let $(u,v)$ be an arbitrary non-cutting edge;
it suffices to show $bd(G;u,v)\leq bd(G;u^*,v^*)+1=H+1$.
Let $P$ be the outer-planar path that connects $(u,v)$
and $(u^*,v^*)$ and define a bonnet $U$ with cap $(u,v)$ to consist
of $P$ and the fans at $u$ and $v$ that include all neighbours of $u,v$ not
in $P$.\footnote{$U$ is actually an umbrella, and indeed the same chain of 
inequalities holds if we replace `$bd$' by `$ud$' everywhere.}

We claim that any hanging subgraph $S_{a,b}$ of $U$ with anchor-edge $(a,b)$
has rooted bonnet depth $bd(S_{a,b};a,b)\leq H$.  
Observe that $(u^*,v^*)$ is not an edge of $S_{a,b}$, because 
$(u^*,v^*)$ is not a cutting edge, and
$(u^*,v^*)\in U$ while $S_{a,b}$ is disjoint from $U$ except at
cutting edge $(a,b)$.
Therefore the cutting edge $(a,b)$
has the root-edge $(u^*,v^*)$ of $\mathcal{U}^*$ in one component and 
$S_{a,b}$ in the other.  
One easily argues that therefore $bd(S_{a,b};a,b)\leq d(\mathcal{U}^*)$,
because the bonnets of $\mathcal{U}^*$ 
can be used to build a bonnet system $\mathcal{U}_S$ of $S_{a,b}$
after trimming parts in $G-S_{a,b}$ and expanding each bonnet
as to include all neighbours of the ends of its cap.
%Hence we have a 
%bonnet system $\mathcal{U}^*$, a cutting-edge $(a,b)$, and a subgraph $S$
%that is the cut-component of $(a,b)$ that does not contain the root-edge 
%of $\mathcal{U}^*$.  One can easily show by induction on the depth of
%the bonnet system that for any such subgraph the rooted umbrella-depth
%(relative to edge $(a,b)$) is no more than the depth of the umbrella system.
%
Thus $S_{a,b}$ has an umbrella system
$\mathcal{U}_S$ with depth at most $d(\mathcal{U}^*)=H$.  
Combining the umbrella systems of these hanging subgraphs with
$U$ gives an umbrella system of $G$ with root-edge $(u,v)$ and depth at
most $H+1$ as desired.
\end{proof}

%======================================================================
\section{Comparison with Other Graph Parameters}\label{chap:Comparison}
%======================================================================

In this section, we compare the bonnet depth and the umbrella depth to
the pathwidth (the other graph parameter previously used for graph
drawing purposes), as well as the so-called rooted pathwidth.
We define these parameters first.  Let $T$ be a tree.  The {\em pathwidth}
$pw(T)$ of $T$ is 0 if $T$ is a single node, and $1+ \min_P \max_{T'\subset T-P} pw(T')$ otherwise.  Here $P$ is an arbitrary path in $T$, and $T'$ is any
subtree that remains after deleting the vertices of $P$.  Any path where the minimum is achieved
is called a {\em main path} of $T$.   See \cite{Sud04} for more details.
The rooted pathwidth
is quite similar, but forces the path to end at the root. 
Thus,
let $T$ be a rooted tree.  The {\em rooted pathwidth} $rpw(T)$ is 0 
if $T$ is empty
and $1+\min_{P_r} \max_{T'\subset T-P_r} rpw(T')$ otherwise.
Here $P_r$ is a path in $T$ that ends at the root of $T$ and $T'$ is any
subtree that remains after deleting the vertices of $P_r$, where $T'$ is 
rooted as induced by the root of $T$.
Any path where the minimum is achieved is called an {\em rpw-main path}.
See \cite{Bie-OPTI-ArXiV} for more details.
We write $rpw(T,r)$ if the root $r$ is not clear from the context, and define
$rpw^{\mathit{free}}(T) := \min_r rpw(T,r)$.   It is not hard to see that
$\min_r rpw(T,r)$ is attained at a leaf $r$,
because for any interior node $r$ we could have used an even longer rpw-main
path to reach a leaf without making any subtree bigger.

\begin{lemma} 
\label{lem:RootedPathwidthComparison}
Let $G$ be a maximal outerplanar graph $G$ with dual tree $T$,.
%let $(u,v)$ be a non-cutting edge of $G$ and $r$ be node of $T$ 
%that corresponds to the interior face at $(u,v)$ in the standard
%embedding of $G$.  
Then
$$ \frac{1}{2} pw(T) \le bd^{\mathit{free}}(G) \le ud^{\mathit{free}}(G) 
\le rpw^{\mathit{free}}(T) \leq \max\{1,2pw(T)\}.$$
\end{lemma}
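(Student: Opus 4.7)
My plan is to prove the four inequalities in the chain separately. The middle one, $bd^{\mathit{free}}(G)\le ud^{\mathit{free}}(G)$, is immediate from the paper's earlier observation that every umbrella is a bonnet, so any umbrella system is also a bonnet system of the same depth. For $ud^{\mathit{free}}(G)\le rpw^{\mathit{free}}(T)$, I would induct on $|V(G)|$. Choose a leaf face $r$ of $T$ at which $rpw^{\mathit{free}}(T)$ is attained, let $(u,v)$ be one of the two non-cutting edges of $r$, and let $\pi$ be an rpw-main path of $T$ ending at $r$, extended inside $T$ until its other endpoint is a leaf as well (the residual subtrees after this extension are subtrees of the original residuals and still have rpw at most $rpw(T,r)-1$). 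Dualising $\pi$ yields an outerplanar path $P$ in $G$ that contains the cap $(u,v)$ and terminates at another non-cutting edge; use $P$ as the handle of the root umbrella $U_0$ and add fans at $u$ and $v$ containing all remaining neighbours. The single-shared-edge condition holds because $\pi$ enters and leaves the local $u$-fan and $v$-fan at their boundary. Each hanging subgraph of $U_0$ corresponds to a subtree of $T$ rooted at a child of $U_0$'s dual; monotonicity of rpw under rooted subtrees shows that each such subtree has rpw at most $rpw(T,r)-1$. Applying induction and combining produces an umbrella system of depth at most $rpw(T,r)$.

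For the rightmost inequality $rpw^{\mathit{free}}(T)\le\max\{1,2pw(T)\}$, if $pw(T)=0$ then $T$ is a single node and $rpw^{\mathit{free}}(T)=1$. Otherwise I would induct on $|T|$: let $P$ be a pw-main path of $T$ achieving $pw(T)$, and root $T$ at a leaf endpoint of $P$. Using $P$ as the rpw-main path, every residual subtree $T'$ has $pw(T')\le pw(T)-1$ and by induction $rpw^{\mathit{free}}(T')\le 2pw(T)-2$. An analogue of Lemma~\ref{lem:FreeVsRootedUmbrellaDepth} for rooted pathwidth, namely $rpw(T',r')\le rpw^{\mathit{free}}(T')+1$ for any node $r'$ (proved in the same style as that lemma), then bounds the rpw of $T'$ from its attachment vertex by $2pw(T)-1$, yielding $rpw(T,r)\le 1+(2pw(T)-1)=2pw(T)$.

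The main obstacle is the leftmost inequality $\tfrac{1}{2}pw(T)\le bd^{\mathit{free}}(G)$, which I would prove by induction on the depth $d$ of a minimum bonnet system of $G$, establishing the equivalent bound $pw(T)\le 2d$. The key structural observation is that the dual tree $T_0$ of any single bonnet has pathwidth at most $1$: $T_0$ consists of a backbone path (the dual of the ribbon) with at most two additional path-branches attached (the duals of the fans at $u$ and $v$), and a pw-main path of $T_0$ can be routed through one fan, along the backbone between the two fan-attachment nodes, and through the other fan, leaving at most two subpaths of the backbone as residuals (each of pathwidth $0$). For the inductive step, the hanging-subgraph duals $T_1,\dots,T_s$ satisfy $pw(T_i)\le 2(d-1)$ by induction and each is attached to $T_0$ by a single edge of $T$. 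Reuse the pw-main path of $T_0$ as the pw-main path of $T$; each residual component is a subpath of the backbone with some $T_i$'s hanging off it, and the standard construction (insert the path decomposition of each attached $T_i$, augmenting every one of its bags with the attachment vertex) shows that this residual has pathwidth at most $2(d-1)+1=2d-1$. Therefore $pw(T)\le 1+(2d-1)=2d$. The part requiring the most care is handling the degenerate cases---an empty fan, a fan attached at an endpoint of the backbone so that its branch effectively extends the backbone, and the base where $G$ is a single triangle---as well as verifying the monotonicity of rpw under rooted subtrees used in the second inequality.
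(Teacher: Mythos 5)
Your strategy coincides with the paper's on all four inequalities: the middle one is the same triviality, the third dualises an rpw-main path into an umbrella handle exactly as the paper does, the fourth re-uses a pw-main path as an rpw-main path, and the first charges one unit of pathwidth to the ribbon and one to the fans per level of bonnet depth. (The paper takes the ribbon's dual $P^*$ as the main path of $T$ and the fan-duals as main paths one level down; you route the main path through fan--backbone--fan and push the leftover backbone segments one level down; both variants deliver $pw(T)\le 2d$.) However, two of your stated claims are wrong as written. Your ``key structural observation'' $pw(T_0)\le 1$ is false: the fan-duals attach to the ribbon-dual at the faces where $u$, resp.\ $v$, last occur, and the ribbon may continue arbitrarily far past both attachment faces, so the two residual backbone segments left by your fan--backbone--fan path can be paths on two or more nodes, which have pathwidth $1$, not $0$; the correct general bound is $pw(T_0)\le 2$. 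Fortunately your inductive step never actually uses $pw(T_0)\le 1$ --- it bounds each residual component directly by $1+2(d-1)=2d-1$ --- so only the observation itself and the base case $d=1$ (where $pw(T_0)\le 2=2d$ suffices) need restating.

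The second problem is in the last inequality, where your arithmetic breaks at $pw(T)=1$: there the induction gives $rpw^{\mathit{free}}(T')\le\max\{1,2pw(T)-2\}=1$, and your re-rooting ``$+1$'' yields $rpw(T',r')\le 2$ rather than the claimed $2pw(T)-1=1$, so the chain only delivers $rpw(T,r)\le 3$ instead of the required $2$. This is patchable --- for $pw(T)=1$ every residual subtree of $T-P$ is a single node, whose rooted pathwidth is $1$ from its unique root, so no re-rooting loss occurs --- but the case must be handled separately. Note also that your proposed re-rooting lemma $rpw(T',r')\le rpw^{\mathit{free}}(T')+1$ is true but is not quite a carbon copy of Lemma~\ref{lem:FreeVsRootedUmbrellaDepth}: one needs monotonicity of rooted pathwidth under rooted subtrees (which you also invoke, without proof, in the third inequality). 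The paper sidesteps the whole issue by citing the bound $rpw(T',r')\le 2pw(T')+1$, which is valid for every choice of root $r'$.
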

\begin{proof}
We first show that $pw(T)\leq 2bd(G;u,v)$ for any choice of root-edge $(u,v)$.
Fix any bonnet system that has depth $H:=bd(G;u,v)$ and root-edge $(u,v)$, and
let $U_0$ be its root-umbrella.  Recall that $U_0$ is split into the ribbon
$P$, which is an outer-planar path, and the two fans $F_1,F_2$, which are also 
outer-planar paths.  Use the dual tree $P^*$ of $P$ as main path for $T$, and 
for $i=1,2$, use the dual tree $F_i^*$ of $F_i$ in the subtree $T_i$ of $T-P^*$ 
that contains $F_i^*$.  Any subtree $T'$ of $T-P^*-F_1^*-F_2^*$ corresponds
to a hanging subgraph of $U$ that has bonnet depth at
most $H-1$.  By induction $pw(T')\leq 2H-2$,
and $pw(T)\leq 2H$ as desired.

Since an umbrella is a bonnet, we have $bd(G;u,v)\leq ud(G;u,v)$ for all
non-cutting edges $(u,v)$, and the second inequality holds.

For the third inequality, assume that
$H:=rpw^{\mathit{free}}(T)$
is attained when rooting $T$ at leaf $r$, 
and let $(u,v)$ be a non-cutting edge of $G$ incident to $r$.
We claim that $ud(G;u,v)\leq rpw(T,r)=H$.
Let $P^*$ be an rpw-main path of $T$; without loss of generality we
may assume that $P^*$ connects from $r$ to a leaf of $T$ (else a longer
path could be used). Let $P$ be the 
outerplanar path whose dual tree is $P^*$; it connects $(u,v)$ to some
non-cutting edge since $P^*$ connects $r$ to a leaf of $T$.  Let $U_0$
be the umbrella obtained by adding all other neighbours of $u$ and $v$
to $P$.  For any hanging subgraph $S_{a,b}$ of $P$, the dual tree $T_S$ is a subtree
of $T-P^*$, and therefore $rpw(T_S)\leq H-1$.  By induction, 
$ud(S_{a,b};a,b)\leq H-1$,
and combining the umbrella systems of the hanging subgraphs with $U_0$ hence
shows $ud(G;u,v)\leq H$ as desired.

For the last inequality, we already know that $rpw(T,r)\leq 2pw(T)+1$ for
all choices of the root $r$ \cite{Bie-OPTI-ArXiV}.  To prove the slightly
tighter bound, assume that $P$ is a main path of $T$ and let $r$ be its end.
Root $T$ at $r$, and use $P$ as rpw-main path.
If $T=P$, then $rpw(T)=1$ and we are done.  Else $pw(T)\geq 1$ and
any subtree $T'$ of $T-P$ has $pw(T')\leq pw(T)-1$.  The
bound in \cite{Bie-OPTI-ArXiV} gives $rpw(T') \leq 2pw(T)-1$.  Thus
$rpw(T,r)\leq 1+\max_{T'} rpw(T') \leq 2pw(T)$.
\end{proof}

It is known that $rpw(T)\leq \log(n+1)$ \cite{Bie-OPTI-ArXiV}, and
therefore $bd^{\mathit{free}}(G)\leq \log(n+1)$ as claimed earlier.
All bounds in Lemma~\ref{lem:RootedPathwidthComparison}
are tight, except for a `+1' term. 
Both graphs to show this are constructed as follows.
Define a small graph $G_1$ 
%(shown in Figure~\ref{fig:RootedPathwidthUpperBound} for two tightness
%proofs) 
that has a marked root-edge $(u,v)$ and some marked anchor-edges.
Obtain graph $G_i$ by starting with graph $G_1$ and attaching copies
of $G_{i-1}$ such that the root-edge of each $G_{i-1}$ is one of the
anchor-edges of $G_1$.  Let $T_i$ be the dual tree of graph $G_i$.
%and consider it to be rooted at the face incident to $(u,v)$.
We leave to the reader to verify the following claims:
\begin{itemize}
%\item For the construction using the left graph in Figure~\ref{fig:tight},
%we have $rpw(T_i)\leq i \leq ud(G_i;u,v)$. By the upper bound
%of Lemma~\ref{lem:RootedPathwidthComparison} equality holds throughout,
%and this bound is tight.
\item For the construction using the graph in Figure~\ref{fig:tight}(a),
we have $pw(T_i)\geq 2i$ while $ud(G_i;u,v)\leq i$. Therefore
$$i\leq \frac{1}{2} pw(T_i) \leq bd^{\mathit{free}}(G_i) \leq ud^{\mathit{free}}(G_i)
\leq ud(G_i;u,v)\leq i,$$
and the left two inequalities are tight.
\item For the construction using the graph in Figure~\ref{fig:tight}(b),
we have $pw(T_i)\leq i$ and $ud(G_i;u,v)\geq 2i$.  Therefore 
$$2i\leq ud(G_i;u,v)\leq ud^{\mathit{free}}(G_i)+1\leq 
rpw^{\mathit{free}}(G_i)+1 \leq 2pw(T_i)+1\leq 2i+1,$$
and the third and fourth inequality are tight up to a `$+1$' term.
\end{itemize}

\begin{figure}[ht!]
\hspace*{\fill}
\begin{subfigure}[b]{0.5\linewidth}
\includegraphics[width=80mm]{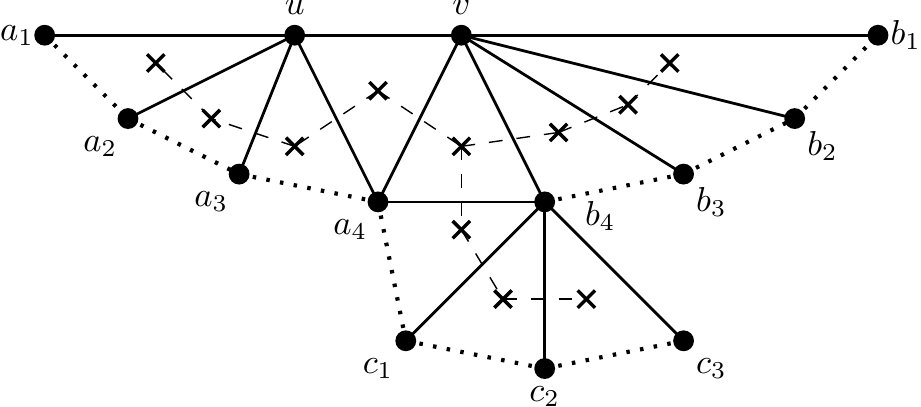}
\caption{}
\end{subfigure}
\hspace*{\fill}
\begin{subfigure}[b]{0.4\linewidth}
\includegraphics[width=60mm]{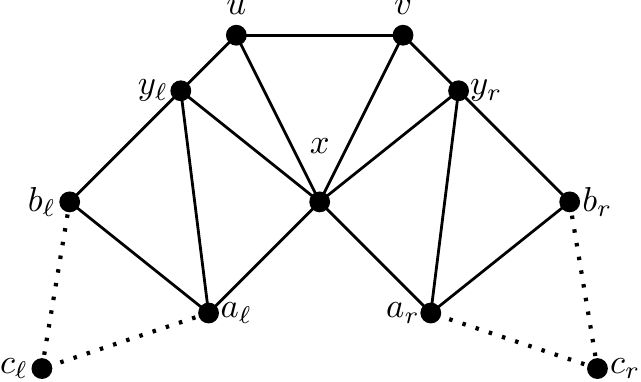}
\caption{}
\end{subfigure}
\hspace*{\fill}
\caption{Constructions to prove tightness.}
\label{fig:RootedPathwidthUpperBound}
\label{fig:PathwidthUpperBound}
\label{fig:PathwidthLowerBound}
\label{fig:tight}
\end{figure}

\end{appendix}
\end{document}